\documentclass[prb,twocolumn,superscriptaddress,aps]{revtex4-2}

\usepackage{amsmath,amssymb,amsfonts}
\usepackage{graphicx}
\usepackage{bm}
\usepackage[colorlinks=true,linkcolor=blue,citecolor=blue,urlcolor=blue]{hyperref}
\usepackage{xcolor}
\usepackage{amsthm}

\newtheorem{theorem}{Theorem}
\newtheorem{corollary}{Corollary}
\newtheorem{lemma}{Lemma}
\newtheorem{proposition}{Proposition}
\theoremstyle{remark}
\newtheorem{remark}{Remark}

\newcommand{\kB}{k_{\mathrm{B}}}
\newcommand{\kzero}{\kappa_0}
\newcommand{\half}{\tfrac{1}{2}}
\newcommand{\quarter}{\tfrac{1}{4}}

\begin{document}

\title{Topological and kinetic origins of fractional thermal conductance at the topological insulator--superconductor interface}

\author{Laurence A. Jacobs}
\email{laurence.jacobs@uzh.ch}
\affiliation{Centro de Ciencias de la Complejidad, Universidad Nacional Aut\'onoma de M\'exico, CDMx, Mexico}
\affiliation{Center for Molecular Cardiology, University of Zurich, 8952 Schlieren, Switzerland}
\author{Guillermo R. Zemba}
\affiliation{Facultad de Ingenier\'ia y Ciencias Agrarias,  Pontificia Universidad Cat\'olica Argentina,Av. Alicia Moreau de Justo 1500,(C1107AAZ) Buenos Aires, Argentina}
\affiliation{Departamento de F\'{\i}sica Te\'orica de Interacciones Fundamentales y Sistemas Complejos, Laboratorio Tandar,Comisi\'on Nacional de Energ\'{\i}a At\'omica, Av.Libertador 8250,(C1429BNP) Buenos Aires, Argentina}

\date{\today \\[0.5ex] {\small (\protect\text{arXiv}:2401.XXXXX [cond-mat])}}

\begin{abstract}
\noindent A recent experiment by Roy \emph{et al.}\ [Nat.\ Commun.\ \textbf{17}, 2853 (2026)] demonstrated a robust half-integer thermal conductance plateau, $\half\kzero T$, at a bipolar $(\nu,\nu')=(2,-1)$ junction in bilayer graphene, produced not by non-Abelian topology but by full equilibration of co-propagating electron and hole edge modes. We prove a theorem that fixes when the two origins can be told apart: under full thermal equilibration, the two-terminal thermal conductance of a network of chiral edge segments joined at ideal floating contacts is a rational function of the net chiral central charges of the segments alone, quantities pinned by the gravitational anomaly and invariant under arbitrary local boundary kinetics. The corollary is an impossibility statement: whenever two realizations present the same anomaly data to the same network, no thermal-conductance measurement can distinguish them. That the equilibrated $(2,-1)$ value equals the central charge of a chiral Majorana mode is an arithmetic fact about one filling combination, but wherever such a coincidence occurs it is beyond the reach of thermometry, and the separation must come from the charge sector, which is not anomaly-pinned at a superconducting boundary. We develop the topological insulator--superconductor interface as the application: a vortex carries fractional charge $e/4$ from the $\theta=\pi$ magnetoelectric coupling, the boundary hosts a chiral Majorana mode of central charge $\half$, and a laterally adjacent integer quantum Hall (IQH) region supplies the kinetic realization. The Lorenz ratio (anomalous for the isolated Majorana boundary, $L_0/(1+4|\nu_{\mathrm{IQH}}|)$ in the composite device) and the excess shot noise (growing with $B$ from $e/4$ vortex tunneling versus locked to IQH plateaus) together resolve the mechanism, provided the quantum Hall edge does not abut the superconductor: direct contact Andreev-converts the charged modes, leaving the net central charge and $\kappa_{\mathrm{th}}$ exactly unchanged while destroying the charge sector that alone could decide the question. That such a device-level blindness is possible is itself an instance of the theorem.
\end{abstract}

\maketitle

\section{Introduction}

The thermal conductance $\kappa_{\mathrm{th}}$ is a celebrated probe of topological order. For chiral edge theories it is quantized in units of $\kzero T \equiv (\pi^2 \kB^2/3h)T$ with a coefficient $c$ equal to the central charge of the edge conformal field theory, so half-integer values $\half \kzero T$ are conventionally taken as direct evidence of a Majorana edge mode and hence of non-Abelian topological order \cite{KaneFisher1997,Banerjee2018,PaulHeiblum2024,Moore1991}. The most celebrated instances are the $\nu=5/2$ fractional quantum Hall (FQH) state and the Kitaev spin liquid \cite{Matsuda2025}, the latter's half-integer thermal Hall plateau having been reported in $\alpha$-RuCl$_3$ \cite{Kasahara2018,Yokoi2021}.

A recent experiment by Roy \emph{et al.}\ \cite{RoyNatComm2026} has put pressure on this reading. The authors engineer a bipolar $n$--$p$--$n$ heterojunction in bilayer graphene with bulk filling $\nu_{\mathrm{BG}}=2$ and a gated central region at $\nu_{\mathrm{TG}}=-1$, and measure a two-terminal thermal conductance plateau at $\half \kzero T$. The mechanism is not non-Abelian topology: it is full equilibration of co-propagating electron and hole edges. For the $\nu$--$\nu'$--$\nu$ geometry --- two outer regions at filling $\nu$ flanking a gated central region at $\nu'$, with a floating contact at zero net chemical potential serving as the noise thermometer --- the predicted equilibrated value
\begin{equation}
\kappa_{\nu \nu' \nu} \;=\; \frac{|\nu||\nu'|}{|\nu|+2|\nu'|}\,\kzero T \;=\; \half\,\kzero T \qquad (\nu=2,\;\nu'=-1)
\label{eq:RoyFormula}
\end{equation}
is recovered to better than $10\%$, and the same value is reproduced in a second device with $(\nu,\nu')=(-2,1)$. The three indices of $\kappa_{\nu\nu'\nu}$ record the three regions traversed in series; the factor $2$ multiplying $|\nu'|$ in \eqref{eq:RoyFormula} is the multiplicity with which the outer phase is traversed, as we recover in Sec.~\ref{sec:theorem}. The Wiedemann--Franz law $L \equiv K/(T G) = L_0 \equiv \pi^2 \kB^2 / 3 e^2$ holds, in sharp contrast to anyonic heat flow \cite{Banerjee2017,Srivastav2021,Srivastav2022}.

Roy \emph{et al.}\ frame their result as a challenge: \emph{robust fractional thermal conductance can arise from mundane equilibration dynamics, not topology}. In their Discussion they explicitly invite extensions to other platforms, in particular to systems exhibiting charge fractionalization without non-Abelian topology, and they note in passing that half-integer $\kappa_{\mathrm{th}}$ without non-Abelian modes may appear in superconducting contexts \cite{GlodzikSedlmayr2023}. The present paper takes up that invitation --- and sharpens it into a theorem.

\emph{Main result.} The central result of this paper is not a new platform but a general theorem, which we state here in words and prove in Sec.~\ref{sec:main}. Under full thermal equilibration, the two-terminal thermal conductance of \emph{any} network of chiral edge segments joined at ideal floating contacts depends on the microscopic content of each segment only through its net chiral central charge (the coefficient of the boundary gravitational anomaly) and otherwise only through the connectivity of the network. We call this \emph{anomaly pinning}. Its immediate corollary is an impossibility statement: \emph{no thermal-conductance measurement, in any equilibrated geometry, can distinguish two boundaries that share the same anomaly coefficient}, whatever the microscopic origin of that coefficient. A half-integer plateau produced by a genuine chiral Majorana mode and one produced by kinetic equilibration of ordinary integer edges are, at the level of $\kappa_{\mathrm{th}}$, the same measurement. We are careful about what this does and does not assert. It does not explain the coincidence Roy \emph{et al.}\ exhibit: that the equilibrated value of the $(2,-1)$ junction happens to equal $c_M = \half$ is an arithmetic accident of that filling combination, not a consequence of anomaly pinning. What the theorem establishes is that the accident is irreparable by thermometry --- once two realizations present the same anomaly data to the same network, no measurement of $\kappa_{\mathrm{th}}$, however accurate, separates them.

This reframes the entire topology--versus--kinetics question. If $\kappa_{\mathrm{th}}$ cannot separate the mechanisms as a matter of principle, the separation must come from a quantity that is \emph{not} anomaly-pinned. At a superconducting boundary the charge sector is exactly such a quantity: charge conservation is broken by the condensate, so the electrical conductance and the shot noise are free of the anomaly constraint that ties down the heat current. The resolution of the topology--versus--kinetics distinction thus moves, necessarily, into the charge sector --- and the Lorenz ratio and the excess shot noise are, we argue, the \emph{unique} observables that accomplish it.

\emph{Application.} The natural arena in which to exhibit both the theorem and its charge-sector resolution is the topological insulator--superconductor (TI--SC) interface \cite{FuKane2008,QiHughesZhang2008}, and we develop it as the application occupying the second half of the paper.
The chiral Majorana boundary mode invoked below is not automatic: the $s$-wave condensate preserves time-reversal symmetry, and the boundary between two time-reversal-symmetric gapped regions carries counter-propagating Majorana partners with $c-\bar{c}=0$ and hence no net chiral thermal response. Time-reversal symmetry must be broken on the far side of the wall, as at the Fu--Kane superconductor--magnet domain wall. Here the perpendicular field $B$ does double duty: it magnetically gaps the exposed TI surface, converting the boundary of the SC region into a chiral Majorana wall with $c_M=\half$, and it threads the SC region itself with the vortex lattice.
A magnetic vortex piercing the interface carries fractional electric charge $e/4$ per flux quantum, an exact consequence of the $\theta=\pi$ magnetoelectric coupling at the TI surface dressed by the superconducting condensate \cite{HosurGhaemiMongVishwanath2011,QiHughesZhang2008,Jacobs2026}, while the boundary itself hosts a chiral Majorana mode of central charge $c_M=\half$. Two physically distinct mechanisms --- a topological one fixed by the boundary anomaly at $\half\kzero T$, and a kinetic one generated by equilibration with a laterally adjacent integer quantum Hall (IQH) region across a line junction --- act at the \emph{same} physical boundary and, by the theorem, produce thermal responses that no $\kappa_{\mathrm{th}}$ measurement can tell apart. The charge sector separates them: the topological response has anomalous Lorenz ratio and $\sqrt{B}$ shot noise from $e/4$ vortex tunneling, while the kinetic response obeys Wiedemann--Franz and is plateau-locked. This platform is the natural host precisely because the Majorana (neutral, anomaly-pinned) and magnetoelectric (charged, not pinned) sectors are simultaneously present.

The paper is organized as follows. Section~\ref{sec:main} states and proves the anomaly-pinning theorem and its impossibility corollary from the boundary conformal field theory alone, with no reference to any device --- this is the paper's main result. Section~\ref{sec:setup} introduces the TI--SC interface, the $e/4$ charge, and the gravitational-$\theta$ partner. Section~\ref{sec:two_pillars} exhibits the topological and kinetic mechanisms as two instances of the theorem and makes explicit the assumptions behind the device-specific prediction \eqref{eq:kappa_kin}. Section~\ref{sec:diagnostic} develops the Lorenz-ratio and shot-noise diagnostics as the unique charge-sector resolution. Section~\ref{sec:discussion} draws the general lesson, and Sec.~\ref{sec:open} states three open problems whose resolution would complete the theory, the first of them being the specification and mode content of the device itself.

\section{The main result: anomaly pinning of equilibrated thermal transport}
\label{sec:main}

We state and prove the central result of the paper before introducing any device. The only physical input is the universal energy current carried by a chiral edge conformal field theory (CFT); no property of the topological insulator, the superconductor, or the vortex is used. This is deliberate: the theorem is a statement about equilibrated chiral heat transport in general, and the TI--SC interface of Secs.~\ref{sec:setup}--\ref{sec:diagnostic} is one application.

\subsection{Boundary-CFT input}
\label{sec:cft}

A chiral edge branch of central charge $c$ at temperature $T$ carries the universal energy current \cite{CappelliHuertaZemba2002,Kapustin2021}
\begin{equation}
I^E(T) \;=\; \frac{\pi}{12}\, c\, T^2 \qquad (\kB = \hbar = 1),
\label{eq:IE}
\end{equation}
from which the Cappelli--Huerta--Zemba formula for the thermal Hall conductance of a gapped phase with chiral boundary modes follows \cite{CappelliHuertaZemba2002},
\begin{equation}
\kappa_{xy} \;=\; c\, \kzero T, \qquad c \equiv c_R - c_L,
\label{eq:CHZ}
\end{equation}
where $c_R - c_L$ is the net chiral central charge of the edge CFT --- a manifestly physical, frame-independent boundary quantity that captures the gravitational anomaly of the edge without any bulk gravitational-Chern--Simons calculation. Equations \eqref{eq:IE}--\eqref{eq:CHZ} are the sole dynamical input to the theorem below.

\begin{figure*}[t]
\centering
\includegraphics[width=0.92\textwidth]{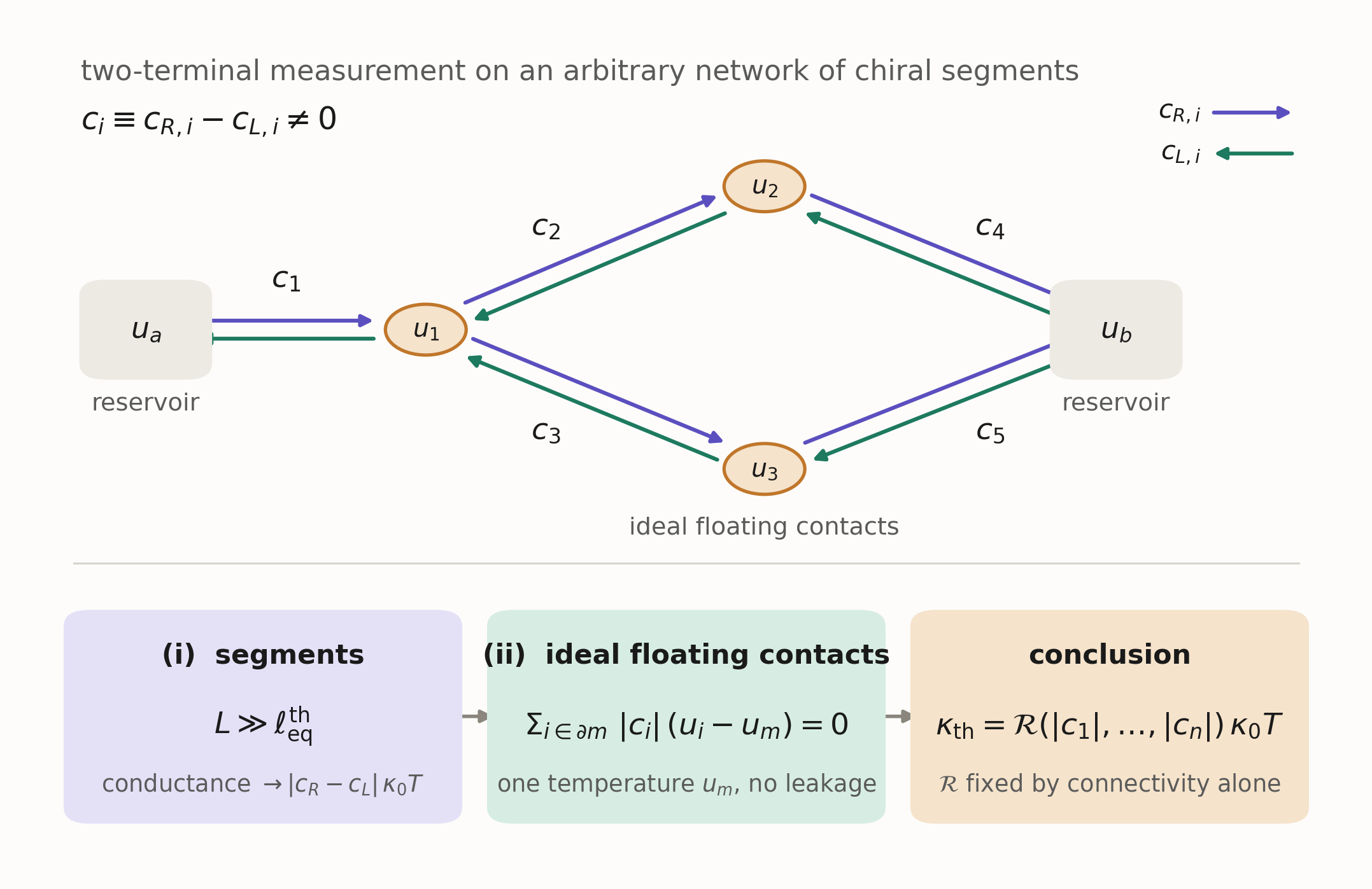}
\caption{The setting of Theorem~\ref{thm:pinning}. Chiral edge segments of net central charge $c_i = c_{R,i}-c_{L,i}$, each long compared with its thermal equilibration length, joined at ideal floating contacts, measured between two reservoirs at $u_a = T_a^2$ and $u_b = T_b^2$. The connectivity is arbitrary; the three-segment chain of Fig.~\ref{fig:chain} is one case. The two hypotheses and the conclusion are the three panels: everything microscopic about a segment enters only through $c_i$, and the rational function $\mathcal{R}$ is fixed by the connectivity alone.}
\label{fig:theorem}
\end{figure*}

\subsection{The theorem}
\label{sec:theorem}

\begin{theorem}[Anomaly pinning of equilibrated thermal transport]
\label{thm:pinning}
Consider a two-terminal thermal transport measurement on a device whose interior consists of (i) chiral edge segments $i=1,\dots,n$, each characterized by downstream and upstream central charges $(c_{R,i},\,c_{L,i})$ with $c_i \equiv c_{R,i} - c_{L,i} \neq 0$, each of length large compared to its thermal equilibration length; and (ii) ideal floating contacts, each of which absorbs all incident energy flux and re-emits every outgoing branch thermally at a \emph{single} contact temperature. Then the two-terminal thermal conductance is
\begin{equation}
\kappa_{\mathrm{th}} \;=\; \mathcal{R}\bigl(|c_1|,\dots,|c_n|\bigr)\,\kzero T,
\label{eq:pinning}
\end{equation}
where $\mathcal{R}$ is a rational function determined solely by the connectivity of the network. In particular, $\kappa_{\mathrm{th}}$ depends on the microscopic content of each segment only through its net chiral central charge $c_i$, which is the coefficient of the boundary gravitational anomaly and is invariant under arbitrary local interactions, disorder, tunneling, and equilibration dynamics on that segment.
\end{theorem}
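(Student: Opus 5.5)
The plan is to reduce the problem to a linear energy-balance network in the variables $u = T^2$ and then read off the structure of its solution. By Eq.~\eqref{eq:IE}, a branch of central charge $c$ at temperature $T$ carries energy current $\frac{\pi}{12} c\, u$. This is linear in $u$. It therefore makes sense to label every floating contact $k$ by $u_k = T_k^2$ and the two reservoirs by $u_a, u_b$.

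The first step handles a single segment $i$ of length much larger than its equilibration length. I will argue that its downstream and upstream branches equilibrate to a common local temperature profile. The net energy current through the segment is then carried entirely by the excess chirality. Concretely, if the segment joins a contact at $u_{\mathrm{in}}$ (feeding the majority-chirality side) to one at $u_{\mathrm{out}}$, the standard fully equilibrated result gives a net flux $\frac{\pi}{12}|c_i|\,u_{\mathrm{in}}$ out of the upstream end, up to exponentially small corrections in length over equilibration length. In other words, the segment acts as a chiral channel of effective central charge $|c_i|$ directed along $\mathrm{sgn}(c_i)$, and the counter-propagating content cancels pairwise.

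I expect this step to be the main obstacle. The argument I would try first is energy conservation along the segment: $\partial_x J^E = 0$. The local net current is $J^E = \frac{\pi}{12}(c_R u_R - c_L u_L)$, and full equilibration forces $u_R = u_L$ at each point. Hence $J^E = \frac{\pi}{12} c_i\, u(x)$ is constant, so $u$ itself is constant along the segment. The boundary condition then fixes $u$ at the value injected on the majority-chirality side. The microscopic details (interactions, disorder, tunneling) only set the equilibration length, and they drop out in the long-segment limit. The gravitational-anomaly interpretation then guarantees that $c_i$ itself is invariant under these local perturbations.

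The second step writes energy balance at each ideal floating contact $k$. The incoming flux is a sum of terms $\frac{\pi}{12}|c_i|\,u_{j(i)}$ over the segments $i$ entering $k$, where $j(i)$ is the contact at the upstream end of segment $i$. The outgoing flux is $\frac{\pi}{12}\big(\sum_{i\,\mathrm{out\,of}\,k}|c_i|\big)u_k$, since all outgoing branches share the single temperature $u_k$. The result is a linear system $M u = v$, where the entries of $M$ and $v$ are integer combinations of the $|c_i|$ and $v$ is linear in $u_a, u_b$. The sparsity pattern of $M$ is fixed by connectivity.

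The third step solves this system by Cramer's rule. Each $u_k$ is a ratio of polynomials in the $|c_i|$, affine in $(u_a, u_b)$. The heat current into reservoir $b$ is likewise a combination of the $|c_i| u_k$. Linearization gives $\Delta u = 2T\Delta T$ and $\frac{\pi}{12}\cdot 2T = \kzero T$ in units $\kB=\hbar=1$. This yields $\kappa_{\mathrm{th}} = \mathcal{R}(|c_1|,\dots,|c_n|)\kzero T$, with $\mathcal{R}$ depending only on the graph.

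Finally, I would check that $M$ is nonsingular. With $c_i \neq 0$ and the network connected to both reservoirs, $M$ is weakly diagonally dominant, since each column sums to the outflow at a contact minus the inflow it feeds. Irreducibility then gives invertibility. As a consistency check, I would confirm that the $\nu$--$\nu'$--$\nu$ chain reproduces Eq.~\eqref{eq:RoyFormula}.
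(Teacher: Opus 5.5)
Your proof is correct at the paper's physics-level rigor. It shares the paper's three-step skeleton: equilibration reduces each segment to $|c_i|$, floating-contact balance is linear in $u=T^2$, and Cramer's rule yields a rational $\mathcal{R}$. You model the segments differently, though.

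The paper's Step~1 gives each segment a \emph{symmetric} conductance, $J=\tfrac{\pi}{12}|c_i|(u_a-u_b)$ in \eqref{eq:segment}. Its network is therefore an undirected resistor network, and Lemma~\ref{lem:contact} is a weighted-Laplacian condition. You instead reduce a segment to a \emph{one-way} channel of capacity $|c_i|$ carrying $\tfrac{\pi}{12}|c_i|u_{\mathrm{in}}$, so your network is a directed Landauer--B\"uttiker graph oriented by $\mathrm{sgn}\,c_i$.

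Yours is the right reduction for a single edge segment: the temperature injected into the minority branch relaxes within $\ell_{\mathrm{eq}}^{\mathrm{th}}$ of the downstream end and drops out of $J$. The paper's symmetric form is what a region bounded by two oppositely directed edges of equal $|c_i|$ gives. Once you include those return edges, your contact equations reduce exactly to \eqref{eq:Mbalance}, so the Roy-formula check goes through. A chain of single directed channels would not balance at $M_1,M_2$.

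Your route also supplies the invertibility that Proposition~\ref{prop:general} takes for granted. Column weak dominance is automatic: the diagonal is a contact's total out-capacity, and the off-diagonals are the part of it feeding other floating contacts. What you then need is not irreducibility, which can fail, but that every floating contact has a directed path to a reservoir. To get $\kappa_{\mathrm{th}}\propto u_a-u_b$, state capacity balance $\sum_{\mathrm{in}}|c_i|=\sum_{\mathrm{out}}|c_i|$ at every node and use the \emph{net} current into $b$; then $u_k\equiv u$ solves the system at equal reservoir temperatures. The paper's route buys brevity, since the series law falls out in one line.

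The thin spot is your Step~1. The claim ``$u_R=u_L$ at each point'' cannot hold up to both ends, because a constant $u$ cannot satisfy both injection conditions. What selects the majority side is the direction of relaxation.

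Take $x$ along the majority branch ($c_R>c_L$) and write the exchange law as $\partial_x(c_Ru_R)=\partial_x(c_Lu_L)=-\gamma(u_R-u_L)$, with $c_Lu_L$ the flux carried toward $-x$. For $v=u_R-u_L$ this gives $\partial_x v=+v/\ell$ with $\ell^{-1}=\gamma(c_R-c_L)/(c_Rc_L)$. The imbalance therefore decays upstream from the minority-injection end, and the bulk sits at $u_{\mathrm{in}}$.

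This is also where $c_i\neq0$ enters: at $c_R=c_L$ the profile is diffusive and $J\propto(u_a-u_b)/L\to0$. The paper sidesteps the point by postulating \eqref{eq:relax} with $\ell_{\mathrm{eq}}^{\mathrm{th}}$ as an input.
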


We prove the theorem, under the stated assumptions, for the three-segment series geometry that all devices in this paper (and in Roy \emph{et al.}) realize, and state the general-network case as a proposition at the end of the section.

\begin{proof}[Proof (physics-level)]
\emph{Step 1: each segment relaxes to its anomaly value.} Within a segment, let $T_R(x)$ and $T_L(x)$ be the local temperatures of the downstream and upstream branches; full equilibration among co-propagating modes ensures each branch carries a single local temperature. By \eqref{eq:IE} the local energy currents are $I^E_{R/L}(x) = \tfrac{\pi}{12}\,c_{R/L}\,T_{R/L}^2(x)$, so the net current $J = \tfrac{\pi}{12}\bigl[c_R T_R^2(x) - c_L T_L^2(x)\bigr]$ is constant in $x$ by energy conservation. Inter-branch equilibration is a local energy exchange proportional to the temperature imbalance; in the squared-temperature variable $v \equiv T_R^2 - T_L^2$ it takes the relaxation form
\begin{equation}
\frac{dv}{dx} \;=\; -\frac{1}{\ell_{\mathrm{eq}}^{\mathrm{th}}}\,\bigl(v - v_\ast\bigr),
\label{eq:relax}
\end{equation}
where $v_\ast$ is the local fixed point at which the two branches carry equal energy current density. Equation \eqref{eq:relax} is solved by $v(x) = v_\ast + [v(0)-v_\ast]\,e^{-x/\ell_{\mathrm{eq}}^{\mathrm{th}}}$, so for a segment of length $L \gg \ell_{\mathrm{eq}}^{\mathrm{th}}$ the boundary imbalance decays exponentially and the interior sits at the fixed point up to corrections $O(e^{-L/\ell_{\mathrm{eq}}^{\mathrm{th}}})$. Imposing the reservoir boundary conditions $T(0)=T_a$, $T(L)=T_b$ and eliminating the equilibrated interior profile gives
\begin{equation}
J \;=\; \frac{\pi}{12}\,|c_R - c_L|\,\bigl(T_a^2 - T_b^2\bigr)\;\bigl[1 + O(e^{-L/\ell_{\mathrm{eq}}^{\mathrm{th}}})\bigr],
\label{eq:segment}
\end{equation}
i.e.\ a linear-response segment conductance $|c_R - c_L|\,\kzero T$. The coefficient cannot be renormalized by any local boundary kinetics: $c_R - c_L$ is the gravitational-anomaly coefficient of the segment's edge theory \cite{ryu2012electromagnetic,CappelliHuertaZemba2002,Kapustin2021}, fixed by the adjacent bulk through the bulk--boundary correspondence, and \eqref{eq:relax} shows equilibration \emph{flows the segment to it}. The ballistic ($L \ll \ell_{\mathrm{eq}}^{\mathrm{th}}$) limit instead retains the non-universal sum $(c_R + c_L)\,\kzero T$, and finite $L/\ell_{\mathrm{eq}}^{\mathrm{th}}$ interpolates via the same exponential.

The rate equation \eqref{eq:relax} is written here in relaxation-time form, with $\ell_{\mathrm{eq}}^{\mathrm{th}}$ an input. A microscopic derivation of $\ell_{\mathrm{eq}}^{\mathrm{th}}$, and of the full crossover function at finite $L/\ell_{\mathrm{eq}}^{\mathrm{th}}$ rather than only its two endpoints, would upgrade \eqref{eq:segment} from an asymptotic statement to a complete one and is exactly the line-junction kinetic problem of Refs.~\cite{Protopopov2017,Nosiglia2018}; carrying it out for the composite boundary of Sec.~\ref{sec:kinetic} would additionally supply the length-dependent Lorenz profile (Sec.~\ref{sec:open}).

\emph{Step 2: networks give rational combinations.} We first record the property of an ideal floating contact that the network algebra requires.

\begin{lemma}[Ideal floating contact]
\label{lem:contact}
A floating contact that (a) absorbs all incident energy flux, (b) exchanges no energy with any reservoir, and (c) re-emits into every outgoing branch at a single temperature $T_m$, sits at the temperature fixed by
\begin{equation}
\sum_{i \,\in\, \partial m} |c_i|\,\bigl(u_i - u_m\bigr) \;=\; 0,
\qquad u \equiv T^2,
\label{eq:contact}
\end{equation}
the sum running over the branches incident on $m$. Hence $u_m$ is the $|c_i|$-weighted mean of its neighbours' squared temperatures.
\end{lemma}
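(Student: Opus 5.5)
The plan is to derive \eqref{eq:contact} as a steady-state energy balance at the contact. We combine the universal branch current \eqref{eq:IE} with the segment result \eqref{eq:segment} from Step~1. Nothing beyond hypotheses (a)--(c) and these two equations should be needed.

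First, I fix the bookkeeping. For each segment $i$ incident on $m$, let $u_i \equiv T_i^2$ be the squared temperature at the far end of that segment, that is, at the neighbouring contact or reservoir. By Step~1 the segment is fully equilibrated, so its net energy current depends only on the two endpoint temperatures. Applying \eqref{eq:segment} with endpoints $T_i$ and $T_m$, the net energy flux delivered into $m$ along segment $i$ is
\begin{equation}
J_{i\to m} \;=\; \frac{\pi}{12}\,|c_i|\,\bigl(u_i - u_m\bigr)\bigl[1+O(e^{-L_i/\ell_{\mathrm{eq}}^{\mathrm{th}}})\bigr].
\end{equation}
Hypothesis (c) enters here: every branch leaving $m$ is emitted at the single temperature $T_m$, so the end of segment $i$ touching $m$ has boundary value $T_m$ on all of its outgoing branches. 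This is exactly the boundary condition under which \eqref{eq:segment} was derived. Without (c), the contact would inject several temperatures and the segment flux would not reduce to a single difference $u_i - u_m$. Hypothesis (a) guarantees that the incoming branches are fully absorbed, so there is no reflected flux to add.

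Second, I impose stationarity. By (b) the contact exchanges no energy with any reservoir, so in steady state the total flux into $m$ vanishes: $\sum_{i\in\partial m} J_{i\to m}=0$. Dropping the prefactor $\pi/12$ and the exponentially small corrections gives \eqref{eq:contact}. Solving this linear equation gives $u_m=\sum_i |c_i|u_i/\sum_i |c_i|$, which is the weighted-mean statement. The denominator is nonzero because the theorem assumes $c_i\neq 0$.

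The main obstacle is justifying the use of $|c_i|$ rather than separate incoming and outgoing $c_{R}$, $c_L$ contributions. Read naively, the contact sees outgoing branches carrying $c_{\mathrm{out}}T_m^2$ and incoming branches carrying $c_{\mathrm{in}}T^2_{\mathrm{in}}$, where the incoming temperatures are set by the interior of the segment, not by the far end. I would handle this by noting that $J$ is constant along the segment. The flux through the contact end therefore equals the net segment current, which by \eqref{eq:segment} is $|c_i|(u_i-u_m)$. The sign is fixed by orienting each segment toward $m$, and taking the absolute value makes that orientation irrelevant.
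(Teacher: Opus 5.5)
Your argument is correct and is essentially the paper's proof: a steady-state energy balance at the contact, with each segment's flux supplied by \eqref{eq:segment} and hypothesis (c) providing the boundary condition $T_m$ at the contact end. Your per-segment bookkeeping is a somewhat cleaner rendering of the same step, since it uses the constancy of $J$ along the segment rather than attributing $\tfrac{\pi}{12}|c_i|u_i$ to an individual incident branch, as the paper's terse phrasing does. Your explicit weighted-mean solution, with $\sum_i|c_i|>0$, fills in the ``hence'' that the paper leaves implicit.
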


\begin{proof}[Proof of Lemma]
Each incident branch delivers energy current $\tfrac{\pi}{12}|c_i|\,u_i$ and each emitted branch removes $\tfrac{\pi}{12}|c_i|\,u_m$, by \eqref{eq:segment} and hypothesis (c). Condition (a)--(b) is steady-state energy balance at the contact, $\sum_i \tfrac{\pi}{12}|c_i|(u_i - u_m) = 0$, which is \eqref{eq:contact}. \end{proof}

By \eqref{eq:segment} each segment carries a heat current linear in the $u$'s with coefficient $\tfrac{\pi}{12}|c_i|$, so in the linear-response limit (all reservoir temperatures close to a common $T$, $u_a - u_b \to 2T\,\Delta T$) the network is a \emph{linear} resistor network in the variables $u$, and Lemma~\ref{lem:contact} supplies one linear constraint per floating contact. Eliminating the internal $\{u_m\}$ from this linear system expresses the two-terminal current as a ratio of determinants in the $|c_i|$ --- a rational function whose structure is fixed by the connectivity alone.

We carry this out explicitly for the geometry realized by the devices in this paper and in Roy \emph{et al.}: three segments in series, of net charges $c_1, c_2, c_3$, joined by two ideal floating contacts $M_1$ and $M_2$, with reservoirs $a$ and $b$ at the two ends. Lemma~\ref{lem:contact} at $M_1$ and at $M_2$ states that the current is the same across each segment,
\begin{equation}
c_1\,(u_a - u_1) \;=\; c_2\,(u_1 - u_2) \;=\; c_3\,(u_2 - u_b) \;=\; \tfrac{12}{\pi}\,J,
\label{eq:Mbalance}
\end{equation}
and eliminating the two interior temperatures gives the series law $\mathcal{R}^{-1} = c_1^{-1} + c_2^{-1} + c_3^{-1}$. This is the content of the label $\kappa_{\nu\nu'\nu}$ in \eqref{eq:RoyFormula}: the three indices are the three regions traversed. When the two outer segments are identical, $c_1 = c_3 \equiv c_{\parallel}$ and $c_2 \equiv c_{\perp}$, the coefficient is
\begin{equation}
\mathcal{R}(c_{\parallel}, c_{\perp}) \;=\; \Bigl(\frac{2}{c_{\parallel}} + \frac{1}{c_{\perp}}\Bigr)^{-1} \;=\; \frac{c_{\parallel}\,c_{\perp}}{c_{\parallel} + 2\,c_{\perp}},
\label{eq:R3arm}
\end{equation}
which is Eq.~\eqref{eq:RoyFormula} at $(c_{\parallel},c_{\perp}) = (|\nu|,|\nu'|)$. The function \eqref{eq:R3arm} is \emph{not} symmetric in its two arguments: the phase traversed twice enters as $2/c_{\parallel}$ and the phase traversed once as $1/c_{\perp}$. The factor $2$ in the Roy formula is therefore a multiplicity of traversal, not a branching of the circuit, and the asymmetry is a physical statement about which phase forms the central region and which the two flanking ones --- same central charges, different placement in the network, different $\kappa_{\mathrm{th}}$.\footnote{An earlier version of this section derived \eqref{eq:R3arm} from a one-in/two-out star: a single floating contact fed by one branch of charge $c_{\parallel}$ from reservoir $a$ and two branches of charge $c_{\perp}$ to reservoir $b$. That circuit does not reproduce \eqref{eq:RoyFormula}; eliminating $u_M$ there gives $2c_{\parallel}c_{\perp}/(c_{\parallel}+2c_{\perp})$, twice \eqref{eq:R3arm}. The series chain above is the correct realization, and it also fixes the sense of the asymmetry: the charge entering with weight $2$ is the one \emph{traversed} twice, i.e.\ the outer regions, not a doubled arm. We record the correction rather than silently deleting it, since the two readings assign the Majorana and IQH sectors to opposite roles in Sec.~\ref{sec:kinetic}.}
\end{proof}

Lemma~\ref{lem:contact} was applied above to \emph{simple} segments, each of definite net charge $|c_i|$. When a floating contact is fed by a \emph{composite} segment carrying two branches of different character --- a charge-blind Majorana branch of charge $\half$ and a charge-carrying IQH branch of charge $|\nu_{\mathrm{IQH}}|$, as at the TI--SC--IQH boundary of Sec.~\ref{sec:kinetic} --- hypothesis (c) of the lemma is a nontrivial physical condition rather than a definition, and the value of $\kappa_{\mathrm{th}}$ forks on whether it holds. We state the fork sharply, because closing it is where the theorem can be strengthened. We say at once that the fork presupposes something prior: that a composite segment with branches $\{\half, |\nu_{\mathrm{IQH}}|\}$ exists as described, i.e.\ that the charged branch is still charged where it meets the Majorana branch. Section~\ref{sec:kinetic} shows that this is a statement about the device geometry and not a given, and Sec.~\ref{sec:open} states it as the first open problem.

\begin{proposition}[Composite contact: conditional value]
\label{prop:composite}
Let a floating contact be fed by a composite segment of branches $\{\half,\,|\nu_{\mathrm{IQH}}|\}$.
\begin{enumerate}
\item[(i)] \emph{Thermalized.} If the two branches equilibrate to a single contact temperature (hypothesis (c) holds), the contact presents one temperature to the network, Lemma~\ref{lem:contact} applies verbatim with the branches entering as segments of charge $\half$ and $|\nu_{\mathrm{IQH}}|$ in the positions the device assigns them, and the series value \eqref{eq:R3arm} follows at $(c_\parallel, c_\perp) = (\half, |\nu_{\mathrm{IQH}}|)$ (the Majorana boundary flanking, the IQH region central), giving $\kappa_{\mathrm{th}} = |\nu_{\mathrm{IQH}}|/(1+4|\nu_{\mathrm{IQH}}|)\,\kzero T$.
\item[(ii)] \emph{Unthermalized.} If the branches maintain distinct temperatures $T_M^{(M)} \neq T_M^{(\mathrm{IQH})}$, the contact imposes \emph{two} conservation laws \eqref{eq:contact}, one per branch, and the composite no longer reduces to a single segment. The two-terminal value is then a different rational function of $\{\half,|\nu_{\mathrm{IQH}}|\}$, whose form depends on the return-path connectivity the two independent branches realize. Two cases make the point. If each branch closes through its own counterpart, the device is two independent chains conducting in parallel and their coefficients add,
\begin{equation}
\mathcal{R} \;=\; \mathcal{R}(\half,\half) + \mathcal{R}(|\nu_{\mathrm{IQH}}|,|\nu_{\mathrm{IQH}}|) \;=\; \tfrac{1}{6} + \tfrac{1}{3}|\nu_{\mathrm{IQH}}|,
\label{eq:caseII_a}
\end{equation}
each term being a uniform three-segment chain. If instead the branches cross, the two chains are the two arrangements of Sec.~\ref{sec:kinetic} in parallel,
\begin{equation}
\begin{split}
\mathcal{R} &=\; \frac{|\nu_{\mathrm{IQH}}|}{1+4|\nu_{\mathrm{IQH}}|} + \frac{|\nu_{\mathrm{IQH}}|}{2(|\nu_{\mathrm{IQH}}|+1)} \\
&=\; \frac{3|\nu_{\mathrm{IQH}}|\bigl(2|\nu_{\mathrm{IQH}}|+1\bigr)}{2\bigl(|\nu_{\mathrm{IQH}}|+1\bigr)\bigl(4|\nu_{\mathrm{IQH}}|+1\bigr)}.
\end{split}
\label{eq:caseII_b}
\end{equation}
Neither value is determined by the anomaly data alone. Selecting the physical connectivity is part of the same kinetic problem that decides between (i) and (ii).
\end{enumerate}
In case (i) $\kappa_{\mathrm{th}}$ is anomaly-pinned in the strict sense of the theorem. In case (ii) the theorem's single-temperature hypothesis fails at the composite contact and $\kappa_{\mathrm{th}}$ acquires connectivity dependence beyond the anomaly data; the two cases are physically distinct and give different plateau values.
\end{proposition}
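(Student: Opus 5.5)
The plan is to reduce each case to the series law already established in the proof of Theorem~\ref{thm:pinning}. I would organize the argument by counting the temperature variables the composite contact carries and the energy-balance constraints it imposes. Throughout I write $n \equiv |\nu_{\mathrm{IQH}}|$ and work in the linear-response variables $u = T^2$. Every segment is then a linear conductance $\tfrac{\pi}{12}|c_i|$ between the $u$-values at its ends, by \eqref{eq:segment}.

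\emph{Case (i).} Hypothesis (c) of Lemma~\ref{lem:contact} holds, so the contact carries a single unknown $u_m$. Both branches then enter \eqref{eq:contact} as ordinary incident or emitted segments. The network is therefore literally the three-segment chain of \eqref{eq:Mbalance}, with the Majorana boundary in the two flanking positions and the IQH region in the centre. I would check this assignment against the device description deferred to Sec.~\ref{sec:kinetic}, and simply take it as stated here. Substituting $(c_\parallel, c_\perp) = (\half, n)$ into \eqref{eq:R3arm} gives $\tfrac12 n/(\tfrac12 + 2n) = n/(1+4n)$. Since only the $|c_i|$ enter, this value is anomaly-pinned in the sense of Theorem~\ref{thm:pinning}.

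\emph{Case (ii).} Now the contact carries two unknowns $u_m^{(M)}$ and $u_m^{(\mathrm{IQH})}$, each obeying its own balance equation \eqref{eq:contact} restricted to the branches of that type. This holds because no energy is exchanged between the branches at the contact, which is the defining content of nonthermalization. The linear system therefore decouples into two subnetworks sharing only the reservoirs $a$ and $b$. Their heat currents add, since the reservoirs fix $u_a$ and $u_b$ independently of each subnetwork. The two-terminal coefficient is thus the sum of the coefficients of the two decoupled chains, and the two connectivities can be treated in turn:
\begin{itemize}
\item \emph{Each branch closes through its own counterpart.} Each chain is uniform, and the series law gives $\mathcal{R}(c,c) = c/3$. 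This yields $\tfrac16 + \tfrac n3$, which is \eqref{eq:caseII_a}.
\item \emph{The branches cross.} One chain has Majorana segments flanking an IQH centre, giving $\mathcal{R}(\half,n) = n/(1+4n)$. The other has the reverse placement, giving $\mathcal{R}(n,\half) = (n/2)/(n+1)$. Putting the sum over the common denominator $2(n+1)(4n+1)$ gives the numerator $2n(n+1) + n(4n+1) = 3n(2n+1)$, which is \eqref{eq:caseII_b}.
\end{itemize}

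Finally, I would show that the cases differ. Exhibiting a single value, say $n=1$, gives $\tfrac15$, $\tfrac12$ and $\tfrac{9}{20}$ respectively, which establishes that the plateau values are physically distinct. The same comparison shows that the value is not fixed by the anomaly data $\{\half, n\}$ alone but depends on the connectivity.

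The main obstacle is the decoupling step in case (ii). One must justify that unthermalized branches at a floating contact really exchange no energy, so that each subnetwork conserves energy separately. One must also justify that the "crossing" connectivity is the one realized, which is a statement about the device rather than about the CFT. I would handle this by making the no-exchange condition part of the definition of case (ii). I would not derive the physical connectivity, but treat it, as the proposition itself says, as an input from the kinetic problem.
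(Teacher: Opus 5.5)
Your proposal is correct and matches the paper's argument, which is carried inside the statement itself. Case (i) is the substitution $(c_\parallel,c_\perp)=(\half,|\nu_{\mathrm{IQH}}|)$ into the series law \eqref{eq:R3arm}; case (ii) is two three-segment chains conducting in parallel between the reservoirs, with their coefficients added. Your arithmetic, including the numerator $3n(2n+1)$ over $2(n+1)(4n+1)$ and the $n=1$ values $\tfrac15,\tfrac12,\tfrac{9}{20}$, checks out.

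One fix to the wording: in the crossing case the two contact temperatures cannot be labelled by branch type as you do with $u_m^{(M)}, u_m^{(\mathrm{IQH})}$, because each sub-chain there contains both kinds of segment. Each of the two balance laws should instead be restricted to the incoming--outgoing pair that its sub-node links, which is exactly the connectivity input the proposition leaves to the kinetic problem.
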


Proposition~\ref{prop:composite} is the precise sense in which the theorem is presently conditional, and it marks the single most valuable place to strengthen the paper. Proving \emph{which} alternative the physical Andreev/ohmic contact realizes --- i.e.\ proving or refuting hypothesis (c) for the Majorana--IQH contact --- would (1) determine whether the TI--SC--IQH platform falls under the anomaly-pinning theorem at all (case (i)) or instead exhibits connectivity dependence beyond the anomaly data (case (ii)), (2) fix the channel assignment in \eqref{eq:kappa_kin} from first principles, and (3) deliver the length-dependent Lorenz crossover as a byproduct, since the same kinetic equations that settle (c) also govern partial equilibration. This is a self-contained line-junction problem of the Gefen--Mirlin type, and it is the central open problem (Sec.~\ref{sec:open}). The stakes are real but bounded: the \emph{universal} theorem of Sec.~\ref{sec:theorem} stands regardless, since it concerns simple segments for which hypothesis (c) is a definition; what case (ii) would revise is only the theorem's \emph{application} to this particular composite device --- and even a case-(ii) outcome is a publishable sharpening, not a refutation, since it would identify the composite boundary as a concrete setting where anomaly pinning is evaded by contact physics.

\begin{proposition}[General networks]
\label{prop:general}
For a network of the type in Theorem~\ref{thm:pinning} with arbitrary connectivity, the elimination of Step 2 applies unchanged: the conservation laws at the floating contacts form a linear system in the variables $u_m = T_m^2$ with coefficients $|c_i|$, and Cramer's rule expresses $\kappa_{\mathrm{th}}$ as a ratio of determinants in the $\{|c_i|\}$, with structure fixed by the connectivity. We state the general case as a proposition because the three-arm result \eqref{eq:R3arm} is the only one used in this paper; the determinantal form and its consequences for more elaborate multi-contact geometries are left to future work.
\end{proposition}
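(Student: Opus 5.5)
The plan is to treat the network as a weighted graph and reduce Proposition~\ref{prop:general} to the standard algebra of a resistor network with conductances $g_i \equiv \tfrac{\pi}{12}|c_i|$.

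\emph{Setting up the graph.} The vertices are the two reservoirs $a,b$ and the floating contacts $m=1,\dots,k$. The edges are the chiral segments $i$. By Step~1 of the proof of Theorem~\ref{thm:pinning}, specifically Eq.~\eqref{eq:segment}, each fully equilibrated segment joining vertices $p,q$ carries net energy current $g_i(u_p-u_q)$. This holds up to exponentially small corrections, and its orientation is fixed by the sign of $c_i$. The conductance depends only on $|c_i|$, so the graph may be treated as undirected.

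\emph{Balance equations and the reduced Laplacian.} Lemma~\ref{lem:contact} gives one balance equation \eqref{eq:contact} per floating contact. Collecting them yields $\mathsf{L}_{\rm int}\,\mathbf{u}_{\rm int} = \mathbf{r}$. Here $\mathsf{L}$ is the weighted graph Laplacian with off-diagonal entries $-g_i$ (summed over parallel edges) and diagonal entries equal to the total incident weight. $\mathsf{L}_{\rm int}$ is its restriction to interior vertices, and $\mathbf{r}$ collects the terms $g_i u_a$ and $g_i u_b$ from edges touching the reservoirs.

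\emph{Key step: invertibility of $\mathsf{L}_{\rm int}$.} I would prove this as follows. Assume every interior vertex is connected to some reservoir; otherwise that component decouples, carries no current, and can be discarded. Then $\mathsf{L}_{\rm int}$ is a symmetric, weakly diagonally dominant matrix. Irreducibility together with at least one strictly dominant row, namely a vertex adjacent to a reservoir, makes it positive definite. Alternatively, the quadratic form $\sum_i g_i(\delta u_p-\delta u_q)^2$, with reservoir values held at zero, vanishes only if $\delta u\equiv 0$. I expect this to be the only step needing care. The conceptual obstacle is the justification that each segment, even one of a chiral-directed network, contributes a symmetric term $g_i(u_p-u_q)$, and this is exactly what Eq.~\eqref{eq:segment} supplies.

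\emph{Cramer's rule and the current.} Cramer's rule gives $u_m = \det \mathsf{L}^{(m)}_{\rm int}/\det \mathsf{L}_{\rm int}$, where $\mathsf{L}^{(m)}_{\rm int}$ is $\mathsf{L}_{\rm int}$ with column $m$ replaced by $\mathbf{r}$. Each $u_m$ is affine in $(u_a,u_b)$ with coefficients that are ratios of polynomials in the $|c_i|$. The current into $b$ is $J=\sum_{i\ni b} g_i(u_{p(i)}-u_b)$. By linearity and by the fact that $J=0$ when $u_a=u_b$, this equals $\mathcal{R}\,\tfrac{\pi}{12}(u_a-u_b)$. Concretely, $\mathcal{R}=\det \mathsf{L}_{\{b\}}/\det \mathsf{L}_{\{a,b\}}$, the ratio of the Laplacian determinant with only $b$ grounded to the one with both reservoirs removed. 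By Kirchhoff's matrix-tree theorem this is a ratio of spanning-tree and spanning-two-forest polynomials in the $|c_i|$. This shows explicitly that $\mathcal{R}$ is rational, positive, and fixed by connectivity alone.

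\emph{Linearization.} Taking $u_a-u_b\to 2T\Delta T$ gives $\kappa_{\rm th}=\mathcal{R}\,\kzero T$. As a consistency check, for the three-segment chain the determinant ratio should reduce to the series law $\mathcal{R}^{-1}=\sum_i|c_i|^{-1}$ and hence to Eq.~\eqref{eq:R3arm}.
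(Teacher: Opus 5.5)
Your proposal is correct and takes the paper's route: the floating-contact balance equations of Lemma~\ref{lem:contact} form a linear resistor-network system in $u=T^2$ with weights $|c_i|$, and Cramer's rule turns the two-terminal current into a ratio of determinants whose structure is fixed by the connectivity. The paper only asserts this, so your invertibility argument for the reduced Laplacian and the explicit matrix-tree form $\mathcal{R}=\det\mathsf{L}_{\{b\}}/\det\mathsf{L}_{\{a,b\}}$ fill in details it leaves to future work, without changing the method.
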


\emph{Relation to existing practice.} The heat-balance calculus of Steps 1--2 is the one routinely used to interpret two-terminal thermal-conductance experiments on quantum Hall edges \cite{KaneFisher1997,Srivastav2019,Srivastav2022,Protopopov2017,Nosiglia2018}; Theorem~\ref{thm:pinning} does not introduce it, it states what it implies. The content added here is that the calculus admits \emph{only} the anomaly coefficients and the connectivity as inputs, and hence the negative statement that follows.

\begin{figure*}[t]
\centering
\includegraphics[width=0.86\textwidth]{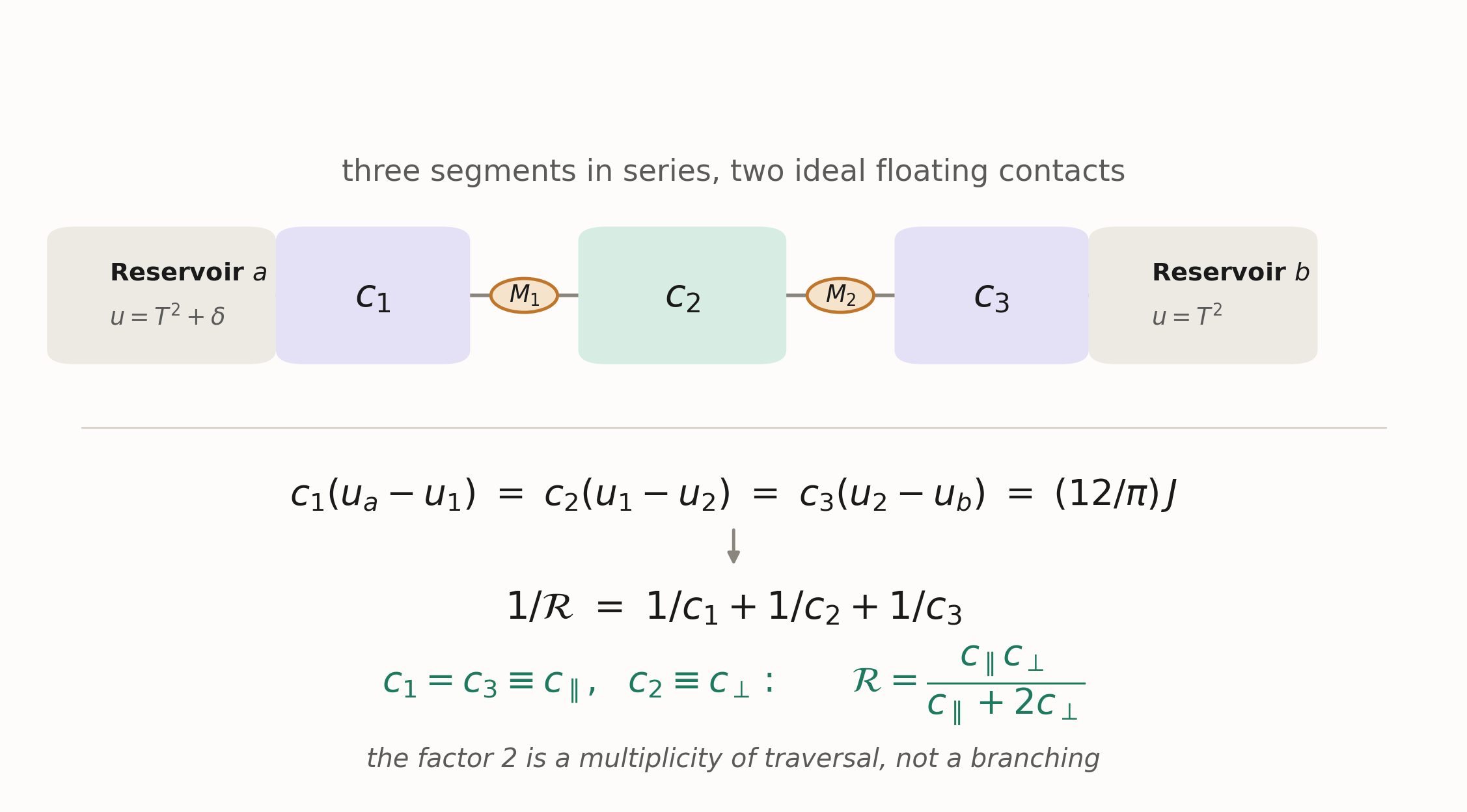}
\caption{The three-segment series geometry realized by the device of Roy \emph{et al.}\ and by the platform of Sec.~\ref{sec:kinetic}. Lemma~\ref{lem:contact} at each floating contact makes the heat current common to the three segments, and eliminating the interior temperatures gives the series law. When the two outer segments are identical the coefficient is \eqref{eq:R3arm}: the factor $2$ in the Roy formula \eqref{eq:RoyFormula} is the multiplicity with which the outer phase is traversed, not a branching of the circuit.}
\label{fig:chain}
\end{figure*}

\begin{corollary}[Impossibility]
\label{cor:impossibility}
No thermal-conductance measurement, in any fully equilibrated two-terminal geometry, can distinguish a boundary whose net central charge $\half$ arises from a single chiral Majorana mode from any other microscopic realization of net chirality $\half$. More generally, two devices with the same network connectivity and the same set $\{|c_i|\}$ are indistinguishable by $\kappa_{\mathrm{th}}$, whatever the microscopic origin of each $c_i$.
\end{corollary}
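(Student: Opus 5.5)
The corollary should follow from Theorem~\ref{thm:pinning} (with Proposition~\ref{prop:general} for arbitrary connectivity) by showing that the measured quantity is a function whose only inputs are the connectivity and the set $\{|c_i|\}$. Write $\mathcal{G}$ for the connectivity data: the graph of segments, floating contacts and the two reservoirs. The plan is to show that $\kappa_{\mathrm{th}}$, viewed as a map on device data, factors through the projection (microscopic content of each segment, $\mathcal{G}$) $\mapsto$ ($\{|c_i|\}$, $\mathcal{G}$). Indistinguishability is then immediate: two devices with equal images under this projection give equal values of $\kappa_{\mathrm{th}}$.

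First I would fix two devices $D$ and $D'$ that share $\mathcal{G}$ and satisfy $|c_i| = |c_i'|$ for every segment. Each may differ arbitrarily in its microscopic content $(c_{R,i},c_{L,i})$, interactions, disorder and tunneling. I would check that both satisfy the hypotheses of Theorem~\ref{thm:pinning}, which means each segment is long compared with its equilibration length and each contact is ideal. This is what ``fully equilibrated'' is taken to mean in the statement.

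Step~1 of the proof then replaces each segment by the conductance $|c_i|\kzero T$, since \eqref{eq:segment} retains no other microscopic parameter. Lemma~\ref{lem:contact} then produces the same linear system in the variables $u_m$ for both devices. By Proposition~\ref{prop:general}, Cramer's rule gives $\kappa_{\mathrm{th}} = \mathcal{R}_{\mathcal{G}}(|c_1|,\dots,|c_n|)\,\kzero T$ with the same rational function $\mathcal{R}_{\mathcal{G}}$ for both, so $\kappa_{\mathrm{th}}(D)=\kappa_{\mathrm{th}}(D')$. The Majorana statement is the special case in which a single segment has $|c_i|=\half$: one realization is a lone chiral Majorana mode with $(c_R,c_L)=(\half,0)$, and the other is any realization with $c_R-c_L=\half$, for example $(\tfrac32,1)$ or an equilibrated integer composite. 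The remaining data are held fixed.

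The phrase ``any measurement'' needs one more step. A thermal-conductance measurement can vary only the geometry $\mathcal{G}$ and the reservoir temperatures. For every admissible $\mathcal{G}$ in which the two boundaries occupy the same positions, the equality above holds. Within linear response the dependence on $T$ enters only through the overall factor $\kzero T$, so no sweep in $T$ or choice of geometry separates $D$ from $D'$.

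I expect the main obstacle to be justifying the assumption that both realizations satisfy the hypotheses in the \emph{same} way. In particular, every contact must be ideal, with a single temperature for every branch, as hypothesis~(c) requires. Proposition~\ref{prop:composite}(ii) shows this can fail for composite segments, and there $\kappa_{\mathrm{th}}$ acquires dependence beyond the anomaly data. I would handle this by making it explicit that the corollary is restricted to networks that satisfy the theorem's hypotheses, so that case~(ii) is excluded by assumption. A secondary point is the sign issue: the corollary is stated with $|c_i|$. I would argue that, for a chiral segment traversed in its downstream direction, the sign of $c_i$ is fixed by the direction of traversal, so only the magnitudes enter, as in \eqref{eq:segment}.
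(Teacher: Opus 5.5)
Your proposal is correct and follows the paper's route. The corollary is read off directly from Theorem~\ref{thm:pinning} (with Proposition~\ref{prop:general} for arbitrary connectivity), since $\kappa_{\mathrm{th}}=\mathcal{R}_{\mathcal{G}}(|c_1|,\dots,|c_n|)\,\kzero T$ admits no other input, and your explicit restriction to networks obeying the single-temperature hypothesis (c) matches the caveat the paper records in Proposition~\ref{prop:composite}.

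One slip in your illustration: an equilibrated composite of integer (Dirac) edge modes has integer $c_R-c_L$, so it is not a segment-level realization of net chirality $\half$. A $\half$ plateau obtained from an integer network such as $\mathcal{R}(2,1)$ involves a different network with different $\{|c_i|\}$, and is not covered by the argument you give.
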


\begin{remark}[What is \emph{not} pinned]
\label{rem:notpinned}
The electrical conductance is pinned by the $U(1)$ charge anomaly only where charge is conserved. At a superconducting boundary, charge conservation is broken by the condensate, so $G$ is \emph{not} anomaly-pinned; the Lorenz ratio $L = \kappa_{\mathrm{th}}/(TG)$ therefore becomes mechanism-sensitive precisely where $\kappa_{\mathrm{th}}$ does not. Likewise, shot-noise Fano factors are properties of the equilibration fixed point and of the tunneling quasiparticles \cite{MannaDasGoldsteinGefen2024a,MannaDasGoldsteinGefen2025}, not of the anomaly. The diagnostics of Sec.~\ref{sec:diagnostic} are consequently not a convenience but the \emph{only possible} resolution of the topology--versus--kinetics question. This is the structural point of the paper, and it is why the second half moves entirely into the charge sector.
\end{remark}

\begin{remark}[Assumptions and their failure modes]
\label{rem:assumptions}
Theorem~\ref{thm:pinning} assumes ideal floating contacts (complete thermalization at a single temperature, no phonon leakage) and full thermal equilibration in every segment. Each assumption fails in instructive ways: partial equilibration replaces \eqref{eq:segment} by a crossover function of $L/\ell_{\mathrm{eq}}^{\mathrm{th}}$ interpolating between the ballistic sum $(c_R+c_L)$ and the anomaly value $|c_R-c_L|$ \cite{Protopopov2017,Nosiglia2018}; edge--phonon coupling modifies the approach to quantization, and the very observability of the anomaly-pinned value can hinge on it \cite{VinklerAvivRosch2018,YeHalaszSavaryBalents2018}. The single-contact-temperature clause is a genuine physical hypothesis, not a formality: for a contact fed by a \emph{composite} segment carrying both a charge-blind Majorana branch and a charge-carrying IQH branch --- precisely the situation at the TI--SC--IQH line junction of Sec.~\ref{sec:kinetic} --- it is not automatic, and establishing it is one of the open problems of Sec.~\ref{sec:open}. Prior to it stands the question of whether that composite segment exists with the stated mode content at all, which is a matter of device geometry (Sec.~\ref{sec:kinetic}). Because the Majorana branch carries heat but no charge, the thermal and charge equilibration crossovers decouple, so at partial equilibration the measured Lorenz ratio flows with junction length between its two limits (Sec.~\ref{sec:diagnostic}) --- a device-level refinement we return to in the Discussion.
\end{remark}

\begin{remark}[Edge networks and junction conditions]
\label{rem:Mintchev}
In the literature on quantum wire networks and edge junctions, vertex conditions for charge and particle currents are formulated through scattering matrices satisfying conservation laws and unitarity at each node, and the formalism extends to energy and thermal currents in similar terms, with a common local temperature assumed at the junction \cite{minchevsorba,CAUDRELIER20131788,KHACHATRYAN2010444,PhysRevB.79.085122}. Lemma~\ref{lem:contact} is the equilibrated-contact counterpart of such vertex conditions.
\end{remark}

\section{Setup}
\label{sec:setup}

\subsection{Geometry}

We consider a planar interface between a three-dimensional time-reversal-invariant topological insulator and an $s$-wave superconductor, in the geometry of Refs.~\cite{FuKane2008,Jacobs2026}. A perpendicular magnetic field $B$ threads the interface, generating a triangular vortex lattice of areal density $n_v = B/\Phi_0$, where $\Phi_0 = h/2e$ is the superconducting flux quantum. We take the superconductor to be type-II throughout, and note that this is a genuine requirement rather than a convenience: the Abrikosov lattice of singly quantized vortices assumed here --- and with it the per-flux-quantum charge \eqref{eq:e_over_4}, the Fu--Kane zero-mode array, and the counting $N_v = BA/\Phi_0$ used in Sec.~\ref{sec:diagnostic} --- has no type-I counterpart, since below the Ginzburg--Landau threshold flux penetrates as macroscopic normal domains of the intermediate state rather than as quantized vortices. The TI bulk is described by the effective action
\begin{equation}
S_{\theta} \;=\; \frac{\theta}{32\pi^2}\!\int\! d^4x\, \epsilon^{\mu\nu\rho\sigma} F_{\mu\nu} F_{\rho\sigma},\qquad \theta = \pi,
\label{eq:axion_em}
\end{equation}
which is the electromagnetic axion term \cite{QiHughesZhang2008}. The boundary of the TI--SC region is interfaced with an adjacent IQH region at integer filling $\nu_{\mathrm{IQH}}$. We adopt the $\nu$--$\nu'$--$\nu$ series geometry of Roy \emph{et al.}\ throughout, with a floating contact serving as the noise thermometer, so that the effective chemical potential of the central contact is held at zero by simultaneous injection of equal electron and hole currents from two identical reservoirs, eliminating shot noise from current partitioning and isolating the Johnson--Nyquist signal that encodes the temperature rise $T_M$.

\begin{figure*}[t]
\centering
\includegraphics[width=0.88\textwidth]{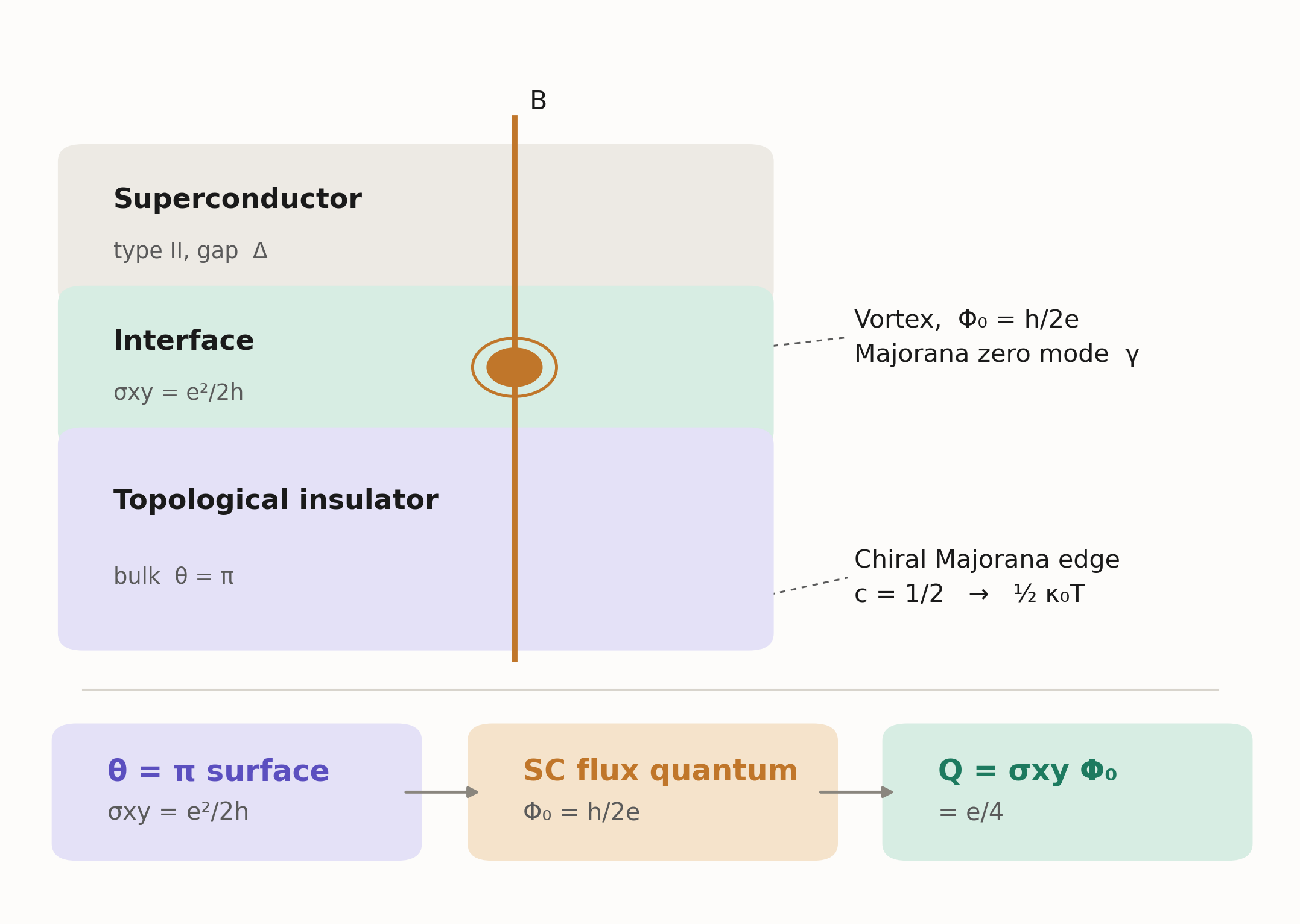}
\caption{The TI--SC interface. A perpendicular field threads the interface and generates an Abrikosov lattice of singly quantized vortices; each carries a Fu--Kane Majorana zero mode at its core and, from the $\theta=\pi$ magnetoelectric coupling, the polarization charge \eqref{eq:e_over_4}. The chain below records the origin of the $e/4$: the half-quantized surface Hall conductance of the $\theta=\pi$ surface, multiplied by the \emph{superconducting} flux quantum $\Phi_0 = h/2e$. The boundary of the region carries a chiral Majorana mode of central charge $c_M = \half$, which is a property of the boundary and not of any individual vortex.}
\label{fig:tisc}
\end{figure*}

\subsection{Recall: fractional vortex charge \texorpdfstring{$e/4$}{e/4}}

A magnetic flux quantum $\Phi_0$ piercing the TI--SC interface induces a localized electromagnetic polarization charge \cite{HosurGhaemiMongVishwanath2011,QiHughesZhang2008,Jacobs2026}
\begin{equation}
Q_{\mathrm{vortex}} \;=\; \frac{e\,\theta}{4\pi} \;=\; \frac{e}{4} \qquad (\theta = \pi),
\label{eq:e_over_4}
\end{equation}
exact at the level of the topological action and independent of microscopic details of the TI gap, the SC pairing strength, or the precise interface profile, provided $\theta$ remains pinned at $\pi$ by time-reversal symmetry of the bulk TI. The same vortex hosts a Fu--Kane Majorana zero mode at its core \cite{FuKane2008}. The two features, the $e/4$ polarization charge and the Majorana zero mode, are independent in the sense that the former survives gapping of the latter (e.g.\ by intervortex Majorana hybridization in a dense lattice), while the latter survives weak deformations that locally shift the polarization. They are coexisting topological features of the same defect.

\subsection{Gravitational-\texorpdfstring{$\theta$}{theta} partner}

The Ryu--Moore--Ludwig program \cite{ryu2012electromagnetic} establishes that for topological insulators and superconductors in three spatial dimensions, the electromagnetic axion term \eqref{eq:axion_em} is accompanied --- in symmetry classes for which it is well-defined --- by a gravitational topological term
\begin{equation}
S_{\theta_g} \;=\; \frac{\theta_g}{1536\pi^2}\!\int\! d^4x\, \epsilon^{\mu\nu\rho\sigma} R^{\alpha}{}_{\beta\mu\nu} R^{\beta}{}_{\alpha\rho\sigma},
\label{eq:axion_grav}
\end{equation}
where $R^{\alpha}{}_{\beta\mu\nu}$ is the Riemann curvature tensor and $\theta_g$ is quantized modulo $2\pi$ in the same way $\theta$ is. For the time-reversal-invariant TI--SC heterostructure considered here, $\theta_g = \pi$ at the topologically nontrivial side of the interface. The coupling \eqref{eq:axion_grav} encodes, via the standard correspondence between gravitational background and thermal transport (Luttinger \cite{Luttinger1964}), a thermal partner to the magnetoelectric response. As Stone has emphasized \cite{StoneGravitational}, however, extracting the laboratory-frame thermal Hall response directly from \eqref{eq:axion_grav} is subtle: a uniform thermal gradient corresponds in the gravitational analog to a uniform Newtonian potential, which generates no transverse energy current; only gravitational field gradients (tidal forces) do (see Appendix~\ref{app:stone}). Throughout this paper, and already in the theorem of Sec.~\ref{sec:main}, we therefore regard \eqref{eq:axion_grav} as an organizing principle and obtain all quantitative thermal statements from the boundary CFT, whose single dynamical input is the universal chiral energy current \eqref{eq:IE} and the Cappelli--Huerta--Zemba formula \eqref{eq:CHZ} it implies.

\section{Two realizations of the pinned value at the TI--SC interface}
\label{sec:two_pillars}

By Theorem~\ref{thm:pinning}, any equilibrated realization of a given anomaly coefficient produces the same $\kappa_{\mathrm{th}}$. We now exhibit two physically distinct realizations at the TI--SC interface, one topological and one kinetic, and confirm that they are instances of the single pinned value, so that (by Corollary~\ref{cor:impossibility}) $\kappa_{\mathrm{th}}$ cannot separate them. The separation is deferred entirely to the charge sector of Sec.~\ref{sec:diagnostic}.

\subsection{Topological realization: the anomaly of the TI--SC boundary}
\label{sec:top_mech}

The boundary of the TI--SC heterostructure supports a single chiral Majorana edge mode \cite{FuKane2008,QiHughesZhang2008}, with central charge
\begin{equation}
c_M \;=\; \half,
\label{eq:cM_main}
\end{equation}
exactly half that of a chiral Dirac fermion. Substituting \eqref{eq:cM_main} into the Cappelli--Huerta--Zemba formula \eqref{eq:CHZ} gives the thermal Hall conductance of the TI--SC interface,
\begin{equation}
\kappa_{xy}^{\mathrm{top}} \;=\; \half\, \kzero T.
\label{eq:kappa_top}
\end{equation}
Equation \eqref{eq:kappa_top} is written as a Hall response, but for a chiral boundary the two-terminal conductance carries the same coefficient, since the universal energy current \eqref{eq:IE} is the whole of the transport; we therefore compare it directly with the two-terminal values of Sec.~\ref{sec:kinetic}. It is a property of the \emph{entire chiral boundary} of the TI--SC region, not of any individual vortex; it is set by the topological character of the bulk through the bulk--boundary correspondence \cite{Kapustin2021} and survives any local deformation that preserves the gapped TI--SC bulk and the chiral Majorana character of the boundary. It is the thermal counterpart of \eqref{eq:e_over_4} in a precise but limited sense: both are parameter-free consequences of the topological character of the interface; but the $e/4$ is a localized integrated charge, whereas \eqref{eq:kappa_top} is a boundary anomaly.\footnote{An earlier draft of this work framed the topological contribution as $\quarter\kzero T$ \emph{per flux quantum}, by direct analogy with the per-vortex charge \eqref{eq:e_over_4}. That framing over-stretches the electromagnetic analogy; the correct statement is \eqref{eq:kappa_top}, integrated over the boundary.}

The role of vortices is through the bulk spectrum. At finite vortex density $n_v = B/\Phi_0$, the Fu--Kane Majorana zero modes localized at each vortex core hybridize through intervortex tunneling, forming dispersive Majorana bands \cite{MurrayVafek2015}. These bands carry a Chern number $\mathcal{C}_M$ determined by the sign structure of the intervortex hopping amplitudes; when $\mathcal{C}_M \neq 0$ they contribute an additional chiral central charge $\mathcal{C}_M/2$ to the boundary,
\begin{equation}
c_{\mathrm{total}} \;=\; \half + \half\,\mathcal{C}_M(B,\Delta_{\mathrm{SC}},\dots).
\label{eq:c_total}
\end{equation}
For chiral $p$-wave SCs Murray and Vafek find $\mathcal{C}_M \in \{0,\pm 1\}$ in different parameter regimes \cite{MurrayVafek2015}; for the proximity-induced Fu--Kane SC the analysis has not been carried out explicitly, but $\mathcal{C}_M \in \{0,\pm 1\}$ is the natural expectation by the same symmetry-class analysis. In the regime $\mathcal{C}_M = 0$ --- the parameter-free regime assumed below unless stated otherwise --- Eq.~\eqref{eq:kappa_top} stands as written; $\mathcal{C}_M = \pm 1$ shifts the boundary value to $\kzero T$ or zero, and transitions between these regimes as a function of $B$ would appear as steps in the measured thermal response.

The assumption $\mathcal{C}_M = 0$ is not free, and we record what it costs. Intervortex hybridization of the Majorana zero modes is controlled by the ratio of the vortex spacing to the superconducting coherence length: for a triangular lattice, $d = (2\Phi_0/\sqrt{3}B)^{1/2} \simeq 49\,\mathrm{nm}\,(B/1\,\mathrm{T})^{-1/2}$, while the hybridization amplitude falls as $e^{-d/\xi}$. The regime in which the Majorana bands remain trivial and \eqref{eq:kappa_top} holds parameter-free is therefore $d \gg \xi$, which favors a short-coherence-length superconductor at moderate field: for representative values, $\xi \simeq 8$\,nm gives $e^{-d/\xi} \sim 10^{-3}$ at $1$\,T, whereas $\xi \simeq 40$\,nm gives $e^{-d/\xi} \sim 0.3$ at the same field, and the zero modes are strongly hybridized throughout the accessible range. This constraint pulls against the $B^{1/2}$ growth of the topological noise signal \eqref{eq:S_top}, which favors large $N_v$ and hence large $B$: increasing $B$ strengthens the noise diagnostic while driving $d$ toward $\xi$ and thus toward the $\mathcal{C}_M \neq 0$ regimes in which the headline value \eqref{eq:kappa_top} no longer holds. The two requirements define a window rather than a limit, and locating it quantitatively for a given heterostructure requires the microscopic calculation of $\mathcal{C}_M$ stated in Sec.~\ref{sec:open}.

\subsection{Kinetic realization: a line junction with an adjacent quantum Hall region}
\label{sec:kinetic}

\emph{What the device is made of.} The second realization requires a device, and the analysis turns on a point that must be settled before any central charge can be assigned: what the quantum-Hall region consists of, and whether its charged edge modes survive proximity to the superconductor. Three arrangements are physically distinct and they do not give the same answer (Fig.~\ref{fig:geometries}). We state all three, adopt the first, and record that the third --- far from being a nuisance case to be avoided --- is an instructive illustration of Corollary~\ref{cor:impossibility}.

\emph{(A) Line junction with a spacer}, adopted here. The TI--SC region and a laterally adjacent two-dimensional electron gas or graphene layer at integer filling $\nu_{\mathrm{IQH}}$, separated by a normal or insulating strip of width $w$. The quantum-Hall edge never touches the superconductor. Its $|\nu_{\mathrm{IQH}}|$ charged modes therefore remain charged, the region contributes electrical conductance $G = |\nu_{\mathrm{IQH}}|e^2/h$, and the two edges are coupled across the strip by single-particle tunneling and Coulomb interaction. This is a genuine line junction, and it is the object treated by the kinetic theory of Refs.~\cite{Protopopov2017,Nosiglia2018} that we invoke throughout. It is also the arrangement in which the spacer width $w$ becomes a design parameter: wide enough that the charged modes are not converted, narrow enough that the two edges equilibrate over the junction length.

\emph{(B) All on the topological-insulator surface.} If the ``quantum Hall region'' is the same TI surface gated into Landau levels, the assignment of central charges changes and must be made in terms of domain walls rather than regions --- a single magnetically gapped Dirac cone is anomalous by itself and carries $\sigma_{xy} = (n+\half)e^2/h$, so ``a region at filling $\nu$'' is not an object to which a central charge can be attached. A wall between two magnetic regions with $n_1 \neq n_2$ carries $|n_1-n_2|$ chiral Dirac modes and hence integer $c$; a wall between an SC-gapped region and a magnetic region carries $|2n+1|$ chiral Majorana modes and hence $c = n+\half$. The Fu--Kane value $c_M = \half$ is the $n=0$ member of this family, not the only one \cite{FuKane2008,QiHughesZhang2008}. Writing $m \equiv 2n+1$ and $\Delta n$ for the jump across the magnetic wall, the series function \eqref{eq:R3arm} gives
\begin{equation}
\frac{\kappa_{\mathrm{th}}}{\kzero T} \;=\; \frac{m\,|\Delta n|}{m + 4|\Delta n|},
\label{eq:kappa_surface}
\end{equation}
which reduces to \eqref{eq:kappa_kin} below at $m=1$, $|\Delta n| = |\nu_{\mathrm{IQH}}|$, and is otherwise a one-parameter family of plateau values. Geometry (B) is the more economical realization (one surface, no second material), and the extra parameter $m$ yields extra predictions. We do not develop it here because the mode content of the magnetic domain walls, their equilibration, and the contacts between them would have to be established with the same care we are about to demand of (A); we leave it as a variant worth its own treatment.

\emph{(C) Direct contact.} If the quantum-Hall edge abuts the superconductor, it is Andreev-converted along the interface: the charged chiral modes hybridize with their hole partners into chiral Andreev edge states, as studied and observed in graphene quantum-Hall--superconductor devices. This case deserves attention rather than avoidance, both because it is the closest to what has actually been fabricated and because of what the theorem says about it. Andreev conversion does not remove modes: $|\nu_{\mathrm{IQH}}|$ chiral Dirac fermions become $2|\nu_{\mathrm{IQH}}|$ chiral Majoranas, so the net chiral central charge of the segment is \emph{unchanged}, $c = |\nu_{\mathrm{IQH}}|$. By Theorem~\ref{thm:pinning} the thermal conductance is therefore exactly the same as in (A). What the conversion destroys is the charge sector: charge is no longer conserved along the boundary, $G$ is not $|\nu_{\mathrm{IQH}}|e^2/h$, and the Wiedemann--Franz comparison loses its reference value. Geometry (C) is thus a device in which $\kappa_{\mathrm{th}}$ is blind by Corollary~\ref{cor:impossibility} \emph{and} the diagnostic that was to rescue it has been switched off by the contact itself. It is a clean illustration of the thesis of this paper --- and, for the same reason, the one geometry in which the proposed measurement cannot be performed. The spacer in (A) is not a convenience. It is what preserves the only sector that can answer the question.

\begin{figure*}[t]
\centering
\includegraphics[width=0.92\textwidth]{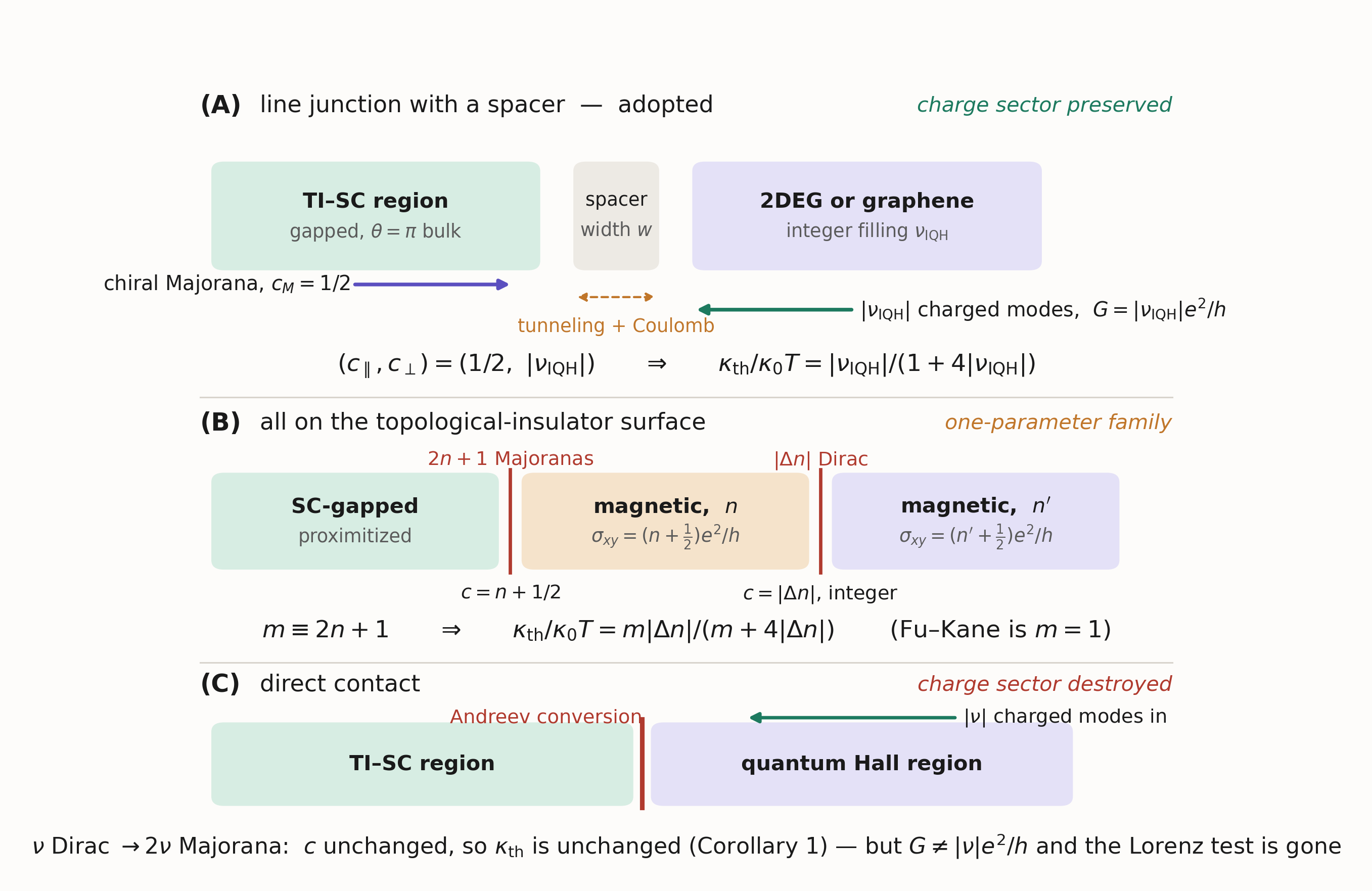}
\caption{Three candidate devices, and why the choice is not cosmetic. \textbf{(A)} A line junction: the quantum-Hall layer is laterally adjacent and separated from the superconductor by a normal or insulating spacer, so its charged modes survive and $G = |\nu_{\mathrm{IQH}}|e^2/h$. This is the geometry adopted, and the object treated by the line-junction kinetic theory of Refs.~\cite{Protopopov2017,Nosiglia2018}. \textbf{(B)} Everything on one TI surface, where central charges attach to domain walls rather than regions: SC$\,|\,$magnetic walls carry $2n+1$ chiral Majoranas, $c = n+\half$, and magnetic$\,|\,$magnetic walls carry $|\Delta n|$ chiral Dirac modes. \textbf{(C)} Direct contact, in which the quantum-Hall edge is Andreev-converted into chiral Andreev edge states. The conversion takes $|\nu|$ Dirac modes to $2|\nu|$ Majorana modes and therefore leaves the net chiral central charge, and $\kappa_{\mathrm{th}}$, exactly unchanged --- an instance of Corollary~\ref{cor:impossibility} --- while destroying the charge sector on which the diagnostics of Sec.~\ref{sec:diagnostic} depend.}
\label{fig:geometries}
\end{figure*}

\emph{The equilibrated value in geometry (A).} In geometry (A) the TI--SC boundary and the quantum-Hall edge co-propagate along a junction of length $L$ exceeding both the charge and thermal equilibration lengths $\ell_{\mathrm{eq}}^{\mathrm{ch}}, \ell_{\mathrm{eq}}^{\mathrm{th}}$, and the setup is the direct analog of the $(\nu,\nu')$ junction of Roy \emph{et al.}, with the role of one of the integer-filled regions played by the TI--SC heterostructure. The quantum-Hall region carries net central charge $|c_{\mathrm{IQH}}| = |\nu_{\mathrm{IQH}}|$. Under full thermal equilibration, the equilibrated two-terminal value follows from the series function \eqref{eq:R3arm} with the TI--SC boundary forming the two \emph{flanking} segments ($c_{\parallel} = c_M = \half$) and the IQH region the \emph{central} one ($c_{\perp} = |\nu_{\mathrm{IQH}}|$), i.e.\ a TI--SC\,$|$\,IQH\,$|$\,TI--SC chain:
\begin{equation}
\kappa_{\mathrm{th}}^{\mathrm{kin}}\big/\kzero T \;=\; \frac{c_M \,|\nu_{\mathrm{IQH}}|}{c_M + 2|\nu_{\mathrm{IQH}}|} \;=\; \frac{|\nu_{\mathrm{IQH}}|}{1 + 4|\nu_{\mathrm{IQH}}|}.
\label{eq:kappa_kin}
\end{equation}
For $|\nu_{\mathrm{IQH}}| = 1$ this gives $\kappa_{\mathrm{th}}^{\mathrm{kin}} = \tfrac{1}{5}\kzero T$; for $|\nu_{\mathrm{IQH}}| = 2$, $\tfrac{2}{9}\kzero T$; for $|\nu_{\mathrm{IQH}}| = 3$, $\tfrac{3}{13}\kzero T$. These are robust plateau values, requiring no fine-tuning, and they arise from boundary kinetics rather than bulk topology, exactly in the spirit of Roy \emph{et al.}

The arrangement is a physical statement about the device, not a free choice, and by the asymmetry of \eqref{eq:R3arm} it matters: the converse chain IQH\,$|$\,TI--SC\,$|$\,IQH would give $\kappa_{\mathrm{th}}^{\mathrm{kin}}/\kzero T = |\nu_{\mathrm{IQH}}|/\bigl(2(|\nu_{\mathrm{IQH}}|+1)\bigr)$, i.e.\ $\tfrac{1}{4}, \tfrac{1}{3}, \tfrac{3}{8}$ for $|\nu_{\mathrm{IQH}}| = 1,2,3$. Which of the two obtains is fixed by the device layout (which phase is the central region) together with how the Majorana and IQH branches thermalize at the contacts, the same question that enters as the single-temperature hypothesis of Theorem~\ref{thm:pinning}. We adopt the flanking arrangement \eqref{eq:kappa_kin} as the natural one for a quantum-Hall strip running between two TI--SC regions, each separated from it by its own spacer, and defer its first-principles justification, together with the replacement of this heat-balance heuristic by a kinetic-equation treatment of the composite boundary, to Sec.~\ref{sec:open}.

We restate, in the interest of leaving no ambiguity, the assumptions on which the device-specific prediction \eqref{eq:kappa_kin} rests, since they are stronger than those of the universal theorem. (a) \emph{Mode content}: geometry (A), with a spacer for which some width $w$ simultaneously prevents Andreev conversion of the quantum-Hall edge and permits inter-edge equilibration over the junction length. We do not establish that such a window exists for realistic parameters; that is the first open problem of Sec.~\ref{sec:open}, and everything below is conditional on it. (b) Full \emph{thermal} equilibration along the junction, $L \gg \ell_{\mathrm{eq}}^{\mathrm{th}}$. (c) An ideal floating contact that thermalizes the composite Majorana-plus-IQH segment to a \emph{single} temperature $T_M$ --- the hypothesis of Theorem~\ref{thm:pinning}, here applied to a segment of mixed charged/neutral character for which it is not guaranteed. (d) The arrangement $(c_\parallel, c_\perp) = (c_M, |\nu_{\mathrm{IQH}}|)$, TI--SC flanking, rather than its converse. Under (a)--(d), Eq.~\eqref{eq:kappa_kin} is an instance of the pinned value $\mathcal{R}(|c_i|)\,\kzero T$ of Theorem~\ref{thm:pinning}, with $\mathcal{R}$ the three-segment function \eqref{eq:R3arm}. The universal theorem of Sec.~\ref{sec:main} requires none of (a)--(d) in device-specific form; only their generic counterparts. Assumption (a) is of a different kind from the rest: (b)--(d) are questions about how a stated device behaves, while (a) is the question of what the device is.

\begin{figure}[t]
\centering
\includegraphics[width=0.98\columnwidth]{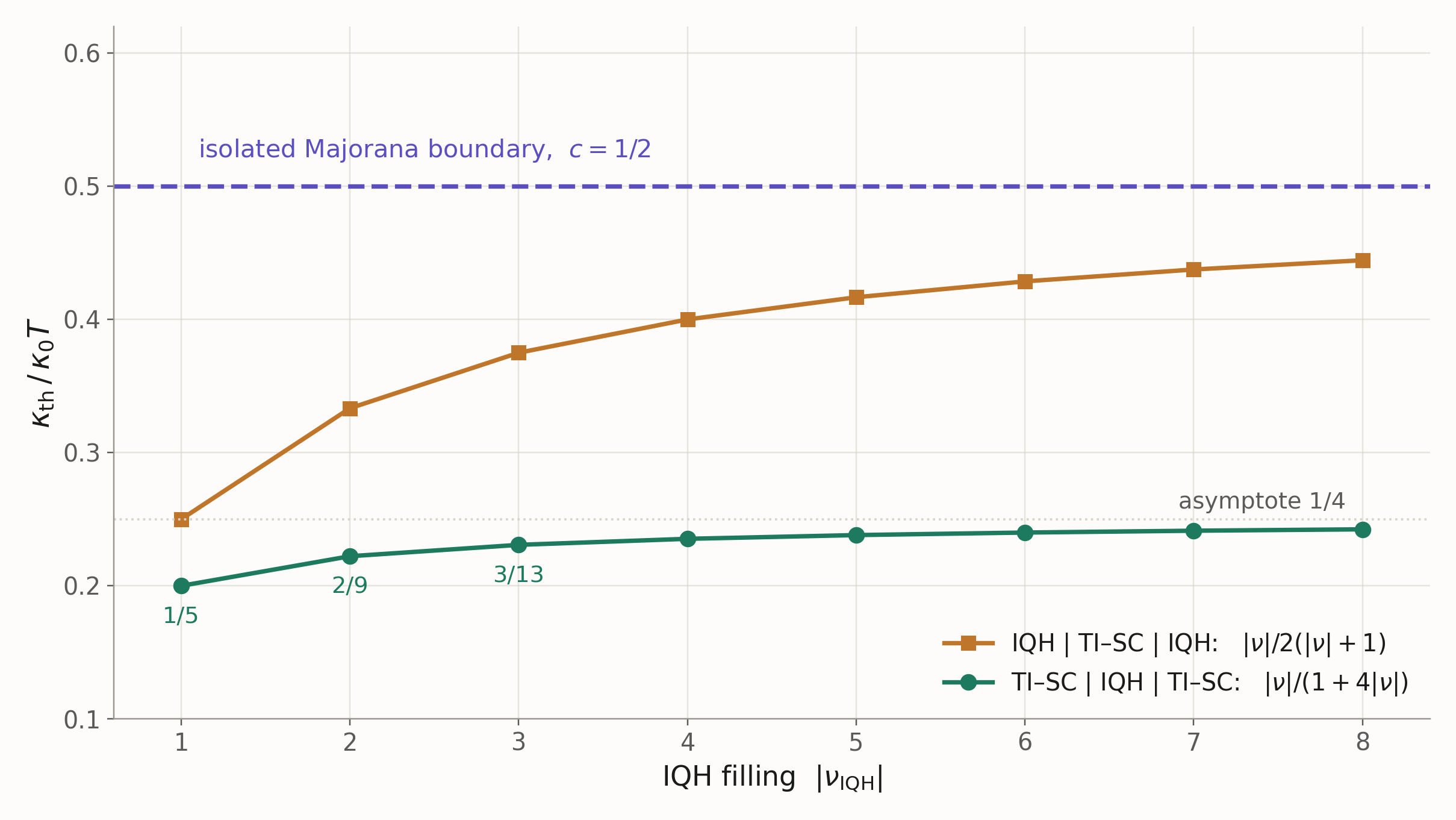}
\caption{Equilibrated plateau values in geometry (A), for the two arrangements of the three-segment chain, against the isolated-boundary value $\half\kzero T$. The flanking family saturates at $\tfrac14$; the converse rises toward $\half$ and approaches the isolated-boundary value asymptotically without reaching it, a near-degeneracy at large filling --- the separation is $1/(|\nu_{\mathrm{IQH}}|+1)$, about $11\%$ at $|\nu_{\mathrm{IQH}}| = 8$ --- of exactly the kind Corollary~\ref{cor:impossibility} concerns. In this platform $\kappa_{\mathrm{th}}$ does discriminate between the isolated and equilibrated regimes at accessible fillings. What Corollary~\ref{cor:impossibility} forbids is finer --- two realizations presenting the same anomaly data to the same network.}
\label{fig:values}
\end{figure}

\emph{What the theorem does and does not say.} With both realizations in hand, we state the exact scope of Corollary~\ref{cor:impossibility}, since it is easy to overstate. The topological value \eqref{eq:kappa_top} is $\half\kzero T$ and belongs to the \emph{isolated} chiral boundary; the kinetic values \eqref{eq:kappa_kin} are $\tfrac15, \tfrac29, \tfrac{3}{13}$ and belong to the \emph{equilibrated} TI--SC\,$|$\,IQH\,$|$\,TI--SC chain. These are different numbers, and they are different because they are different networks. In this platform, therefore, $\kappa_{\mathrm{th}}$ does discriminate between the two regimes: an experiment returning $\tfrac15$ rather than $\tfrac12$ has identified the equilibrated one, and we do not claim otherwise.

We put a number on that statement, because it is not uniform in filling. Measured as a fraction of the isolated-boundary value, the separation between the two regimes is
\begin{equation}
\frac{\half - \mathcal{R}}{\half} \;=\;
\begin{cases}
\dfrac{1+2|\nu_{\mathrm{IQH}}|}{1+4|\nu_{\mathrm{IQH}}|}, & \text{flanking, Eq.~\eqref{eq:kappa_kin}},\\[2.2ex]
\dfrac{1}{|\nu_{\mathrm{IQH}}|+1}, & \text{converse arrangement}.
\end{cases}
\label{eq:separation}
\end{equation}
The flanking family is separated from $\half\kzero T$ by more than $50\%$ at every filling and is distinguishable outright. The converse family is not: its separation falls as $1/(|\nu_{\mathrm{IQH}}|+1)$, reaching $11\%$ at $|\nu_{\mathrm{IQH}}| = 8$ and dropping below the ${\sim}10\%$ accuracy demonstrated in Ref.~\cite{RoyNatComm2026} for $|\nu_{\mathrm{IQH}}| \gtrsim 9$. Beyond that filling the equilibrated junction and the isolated chiral Majorana boundary are experimentally degenerate in the thermal sector, and the charge-sector diagnostics of Sec.~\ref{sec:diagnostic} become necessary in practice even though Corollary~\ref{cor:impossibility} does not make them necessary in principle. The discrimination claimed above is thus a small-filling statement for the converse arrangement and an all-filling statement for the flanking one.

What Corollary~\ref{cor:impossibility} forbids is finer and, for the field, more consequential: two realizations that present the \emph{same} anomaly data to the \emph{same} network are indistinguishable by $\kappa_{\mathrm{th}}$. That is exactly the situation in the graphene device of Roy \emph{et al.}, where $\mathcal{R}(2,1) = \half$ numerically equals $c_M$, and it will be the situation in any future platform where the two happen to coincide. The theorem does not explain that coincidence --- $\mathcal{R}(2,1) = \half$ is an arithmetic fact about one filling combination --- but it shows that no thermometry can undo it, and that the search for a mechanism-sensitive observable must therefore leave the thermal sector altogether. An earlier version of this analysis treated the two mechanisms as \emph{additive} contributions to a total $\kappa_{\mathrm{th}}$ and proposed a magnetic-field scan of $\kappa_{\mathrm{th}}$ as a separator; the theorem shows both moves were wrong in a productive way. In a given equilibrated geometry there is \emph{one} thermal conductance, fixed by anomaly data and circuit topology; ``topological'' and ``kinetic'' label \emph{regimes} (the isolated boundary versus the equilibrated junction), not additive channels; and any $B$-dependence of $\kappa_{\mathrm{th}}$ traces steps in the anomaly data themselves (IQH plateau transitions, or $\mathcal{C}_M$ transitions in \eqref{eq:c_total}), never the mechanism. The mechanism lives in the charge sector, to which we now turn.

\section{Diagnostic signatures: the unique charge-sector resolution}
\label{sec:diagnostic}

By Remark~\ref{rem:notpinned}, the topology--versus--kinetics distinction cannot be read from $\kappa_{\mathrm{th}}$ and must be sought in observables that are not anomaly-pinned. At a superconducting boundary there are exactly two such observables accessible in a two-terminal transport measurement: the electrical conductance (through the Lorenz ratio) and the current noise (through the Fano factor). We treat them in turn and argue that together they are not merely \emph{sufficient} but \emph{necessary} --- any complete resolution must use both, because each is blind to a case the other resolves.

\subsection{Lorenz ratio}

The two mechanisms have qualitatively different Wiedemann--Franz status. The topological response \eqref{eq:kappa_top} arises from a boundary that carries heat but no conserved charge: the vortex carries $e/4$ as an \emph{integrated} polarization charge, localized rather than transported, and contributes no corresponding fractional electrical conductance to the two-terminal measurement. Consequently the Lorenz ratio of the topological response is anomalous,
\begin{equation}
L^{\mathrm{top}} \;\equiv\; \kappa_{\mathrm{th}}^{\mathrm{top}}\big/(T G^{\mathrm{top}}) \;\to\; \infty
\label{eq:L_top}
\end{equation}
in the sense that $G^{\mathrm{top}} \to 0$ while $\kappa_{\mathrm{th}}^{\mathrm{top}}$ remains finite. This is in line with the broader observation that for topological superconductors lacking conserved charge, thermal transport is the only probe of topology \cite{ryu2012electromagnetic,SekineNomura2021}.

The kinetic contribution \eqref{eq:kappa_kin}, by contrast, is carried by equilibrated electron-like modes and satisfies the Wiedemann--Franz law,
\begin{equation}
L^{\mathrm{kin}} \;=\; L_0 \;=\; \pi^2 \kB^2/3 e^2,
\label{eq:L_kin}
\end{equation}
exactly in parallel with Roy \emph{et al.}\ \cite{RoyNatComm2026}. This is the central distinguishing signature: measuring the electrical conductance of the TI--SC--IQH boundary in parallel with the thermal conductance separates the two mechanisms directly. If the measured Lorenz ratio of the fractional response is $L_0$, the response is kinetic; if it is anomalous, the response is topological; and at partial equilibration the measured $L$ interpolates as a function of junction length $L$ (Remark~\ref{rem:assumptions}), with the crossover profile itself a quantitative prediction of the (open) kinetic theory.

Two qualifications are in order. First, \eqref{eq:L_top} refers to the isolated TI--SC boundary, where the chiral Majorana mode is the only channel. In the composite TI--SC\,$|$\,IQH\,$|$\,TI--SC device the two-terminal electrical conductance is that of the IQH region, $G = |\nu_{\mathrm{IQH}}|e^2/h$, so that with \eqref{eq:kappa_kin} the measured ratio is neither limit but the definite fraction
\begin{equation}
L \;=\; \frac{\kappa_{\mathrm{th}}}{T G} \;=\; \frac{L_0}{1 + 4|\nu_{\mathrm{IQH}}|},
\label{eq:L_device}
\end{equation}
i.e.\ $L_0/5$, $L_0/9$, $L_0/13$ for $|\nu_{\mathrm{IQH}}| = 1,2,3$; in geometry (B) the same computation gives $L = m L_0/(m+4|\Delta n|)$. Both expressions presuppose that the charged modes reach the contacts as charged modes, i.e.\ geometry (A) or (B) and not (C): in (C) the numerator of \eqref{eq:L_device} is unchanged and the denominator is not, so a kinetic device would present an anomalous Lorenz ratio and imitate the topological one. This is the sharpest practical reason for the spacer. Equation \eqref{eq:L_device} is the quantity an experiment on the composite device actually returns, and its departure from $L_0$ is a consequence of the series geometry rather than of the mechanism; the mechanism-sensitive statement is the comparison of the measured $G$ with the anomaly-pinned $\kappa_{\mathrm{th}}$, not the numerical value of $L$ by itself. Second, and for the same reason, since $\kappa_{\mathrm{th}}$ is anomaly-pinned the Lorenz ratio carries no information beyond the electrical conductance: measuring $L$ is measuring $G$ with a known multiplier. We keep the Lorenz language because that is how the comparison is made experimentally, but its content lies wholly in the charge sector --- which is the point of Remark~\ref{rem:notpinned}, and the reason the shot noise of Sec.~\ref{sec:diagnostic}.B is not redundant with it.

\subsection{Shot noise}

A second, sharper diagnostic is the excess shot noise at the detector contacts, computed within the framework of Manna, Das, Goldstein, and Gefen \cite{MannaDasGoldsteinGefen2024a,MannaDasGoldsteinGefen2024b,MannaDasGoldsteinGefen2025}. For the kinetic mechanism, the relevant Fano factor is that of equilibrated normal carriers, and the relationship between excess noise $S$ at drains $D_1, D_2$ and the floating-contact temperature $T_M$ takes the form (cf.\ Eq.~(12) of \cite{RoyNatComm2026})
\begin{equation}
S^{\mathrm{kin}}_{\mathrm{Linear}} \;\equiv\; S_{D_1} - \tfrac{1}{4} S_{D_2} \;=\; |\nu_{\mathrm{IQH}}|\, \frac{e^2}{h}\, \kB (T_M - T_0),
\label{eq:S_kin}
\end{equation}
linear in $T_M$ and proportional to the IQH conductance. The particular drain combination, and the weight $\tfrac14$ in it, are those of Eq.~(12) of Ref.~\cite{RoyNatComm2026}: they are properties of that measurement geometry, chosen so that the partitioning contribution of the injected current cancels and the thermal signal survives, and they carry no mechanism dependence. For the topological mechanism, the equivalent signal arises from quasiparticle creation at the vortex defects, with charge $e/4$:
\begin{equation}
S^{\mathrm{top}}_{\mathrm{Linear}} \;=\; \tfrac{1}{4}\,N_v^{1/2}\, \frac{e^2}{h}\, \kB (T_M - T_0) \cdot \mathcal{F}_v,
\label{eq:S_top}
\end{equation}
where $N_v = BA/\Phi_0$ is the number of flux quanta threading the interface area $A$ enclosed by the floating contact, and $\mathcal{F}_v = O(1)$ is the Fano factor of $e/4$ vortex tunneling at the interface, to be computed in the dilute-vortex limit by the methods of \cite{ZhangGornyiGefen2025}. The factor $\tfrac{1}{4}$ is the vortex charge in units of $e$, and the $N_v^{1/2}$ scaling reflects independent contributions from each vortex. That independence deserves scrutiny, and we mark it as provisional. The vortices do not form an uncorrelated gas: they sit on an Abrikosov lattice whose positional order is maintained by the London repulsion, and, carrying the charge \eqref{eq:e_over_4}, they repel electrostatically as well --- a repulsion screened by the Meissner effect and therefore of the same range $\lambda$ as the magnetic one, rather than long-ranged Coulomb. Whether tunneling events at distinct vortices are statistically independent under these correlations, and hence whether the exponent in \eqref{eq:S_top} is $\tfrac12$ rather than the value $1$ that strictly independent noise sources would give, is not settled here; it is part of the $\mathcal{F}_v$ computation. $\mathcal{F}_v$ is the sole material-dependent parameter, and both it and the scaling exponent are open tasks (Sec.~\ref{sec:open}). Equation \eqref{eq:S_top} is the direct analog, in this platform, of the noise diagnostic developed for fractional charges in FQH systems.

The key experimental upshot is that the two signals \eqref{eq:S_kin} and \eqref{eq:S_top} have different scaling with $B$: the kinetic signal is piecewise constant in $B$ between IQH plateaus, while the topological signal grows as $B^{1/2}$. Note that this $B$-scan diagnostic lives entirely in the \emph{noise}, consistently with Corollary~\ref{cor:impossibility}: the thermal conductance itself is $B$-independent within a plateau regime. The ratio $S/(T_M-T_0)$ as a function of $B$, measured along with the conductance, fully separates the contributions.

\begin{figure}[t]
\centering
\includegraphics[width=0.98\columnwidth]{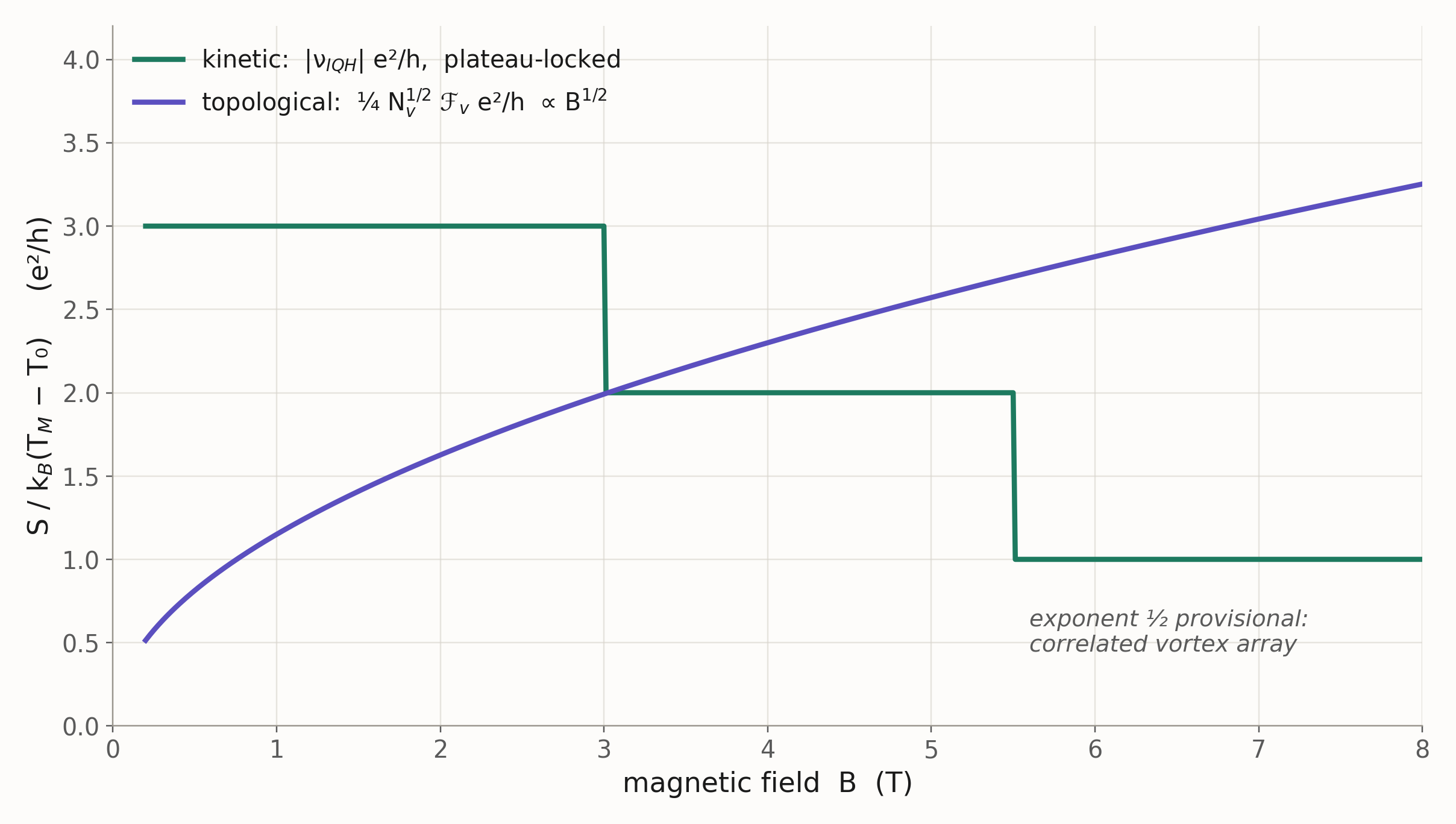}
\caption{The $B$-scan diagnostic, schematically. The kinetic signal \eqref{eq:S_kin} is piecewise constant between IQH plateaus; the topological signal \eqref{eq:S_top} grows with $B$ through the vortex number $N_v = BA/\Phi_0$. The exponent shown is the independent-tunneler value $\tfrac12$ and is provisional: the vortices form a correlated array, and settling the exponent is part of the $\mathcal{F}_v$ computation (Sec.~\ref{sec:open}). Note that the diagnostic lives entirely in the noise --- $\kappa_{\mathrm{th}}$ itself is $B$-independent within a plateau regime, as Corollary~\ref{cor:impossibility} requires.}
\label{fig:noise}
\end{figure}

\subsection{Connection to charge and neutral modes}

Roy \emph{et al.}\ remark in their Discussion that generalizations to FQH boundaries with charge and neutral modes \cite{WangMeirGefen2013,MooreHaldane1997,feldman2021fractional} should follow from their framework. The TI--SC interface is a natural setting for this generalization: the vortex sector contributes an effective neutral mode (the Majorana zero modes and their hybridized bands), while the $e/4$ polarization charge plays the role of a fractionally charged mode. Theorem~\ref{thm:pinning} can be read as the sharp statement of the Roy \emph{et al.}\ program in this setting: the thermal sector is universal and anomaly-pinned across all such realizations, and the charge/neutral decomposition is visible only in the Lorenz ratio and the noise.

\subsection{On measurability}

The diagnostics of this section are charge-sector observables, and we stress that they are measurable \emph{in principle}, so that Corollary~\ref{cor:impossibility} is not vacuous: it says the mechanism is invisible to $\kappa_{\mathrm{th}}$ but visible in quantities that carry independent physical content. Both diagnostics have precedents on platforms of the general type invoked here --- Johnson--Nyquist noise thermometry at a floating contact in the three-arm geometry of Roy \emph{et al.}\ \cite{RoyNatComm2026}, itself built on the graphene quantized-thermal-conductance method of Srivastav \emph{et al.}\ \cite{Srivastav2019}, and the cross-correlation noise measurements of the anyon-collider line \cite{Bartolomei2020} --- and the ingredients (a proximitized three-dimensional TI, an adjacent quantum-Hall region, a floating contact) have each been realized separately. We make no claim about the feasibility of the specific composite device, which lies outside our competence and, we suspect, at or beyond the current experimental frontier; the content of this paper is theoretical, and its experimental import is the conditional statement that \emph{if} such observables are accessed, they --- and only they --- resolve the topology--versus--kinetics question that $\kappa_{\mathrm{th}}$ provably cannot.

\section{Discussion}
\label{sec:discussion}

We have shown that the topological insulator--superconductor interface provides a clean platform for posing the topology--versus--kinetics question raised by Roy \emph{et al.}, and that the question cannot be answered by thermal conductance at all. Theorem~\ref{thm:pinning} shows why: under full equilibration $\kappa_{\mathrm{th}}$ sees only anomaly data and connectivity, so wherever two mechanisms present the same anomaly data to the same network, as they do in the graphene device, the coincidence is beyond the reach of thermometry. It is not the coincidence that is forced, but its irreparability. The resolution is pushed into the charge sector, where the electrical conductance and the shot-noise $B$-scaling separate the mechanisms.

Roy \emph{et al.}\ point at the Bi-bilayer SC of G{\l}odzik and Sedlmayr \cite{GlodzikSedlmayr2023} as another platform where half-integer $\kappa_{\mathrm{th}}$ may arise without Majorana modes. The general lesson, which Theorem~\ref{thm:pinning} makes precise, is that the question \emph{``is this fractional response topological or kinetic?''} \emph{cannot} be answered by $\kappa_{\mathrm{th}}$, however accurately measured, in any equilibrated geometry: the answer lives in the Lorenz ratio and the shot-noise Fano factor. We expect this distinction to play an increasingly central role as more platforms exhibiting fractional thermal response are identified.

\section{Open problems and future directions}
\label{sec:open}

Three problems remain open, and their resolution would complete the theory. We state them precisely, because each is self-contained and centrally relevant, rather than a matter of refinement, and we order them by priority: the first must be settled before the second is even well posed.

\emph{The device and its mode content.} Everything in Sec.~\ref{sec:kinetic} presupposes a composite segment consisting of a charge-blind Majorana branch and a charge-carrying branch of $|\nu_{\mathrm{IQH}}|$ modes, co-propagating and coupled. Whether that object exists is a question about the device, and it has a nontrivial answer. In geometry (C) it does not: a quantum-Hall edge abutting a superconductor is Andreev-converted into chiral Andreev edge states, and the charged branch is simply not there. In geometry (A) it does, provided the spacer width $w$ admits a window --- wide enough to prevent conversion, narrow enough that tunneling and Coulomb coupling equilibrate the two edges over the junction length $L$. Establishing that window, for a stated material stack, is the first task, and it is a microscopic calculation rather than a matter of framing: the conversion length and the inter-edge equilibration length must be estimated in the same model and compared. Geometry (B) poses the corresponding question for magnetic domain walls on a single TI surface, where the charges are $c = n+\half$ and $c = |\Delta n|$ and the plateau values are \eqref{eq:kappa_surface}.

We stress what is and is not at risk here. Theorem~\ref{thm:pinning}, Lemma~\ref{lem:contact} and Corollary~\ref{cor:impossibility} are untouched, as are the $e/4$ charge and the boundary value \eqref{eq:kappa_top}; what depends on the answer is the assignment of central charges to segments, hence \eqref{eq:kappa_kin}, the assumption list (a)--(d), and the reference value in \eqref{eq:L_device}. And the theorem already tells us the shape of the worst case: because Andreev conversion preserves the net chiral central charge, a device that fails this test does not announce itself in $\kappa_{\mathrm{th}}$ --- it announces itself only in the charge sector, which is precisely what the failure destroys. This is why we place the problem first rather than treating it as a caveat.

\emph{The composite-contact kinetic theory.} Theorem~\ref{thm:pinning} applies to the TI--SC--IQH device only under its single-contact-temperature hypothesis: that the charge-blind Majorana branch and the charge-carrying IQH branch, meeting at a common floating contact, thermalize to one temperature. Proposition~\ref{prop:composite} makes the stakes precise. If the branches do thermalize [case (i)], the contact presents a single temperature to the network, Lemma~\ref{lem:contact} applies verbatim, and the plateau values \eqref{eq:kappa_kin} follow. If they do not [case (ii)], the hypothesis fails at the contact, anomaly pinning is broken there, and $\kappa_{\mathrm{th}}$ acquires connectivity dependence beyond the anomaly data --- a value that is not even determined by $\{\half,|\nu_{\mathrm{IQH}}|\}$ alone but depends on the return-path model, as exhibited in Proposition~\ref{prop:composite}. Which alternative the physical Andreev/ohmic contact realizes is not settled by the boundary-CFT input of Sec.~\ref{sec:cft}; it requires a kinetic-equation treatment of the composite boundary, replacing the heat-balance heuristic of Appendix~\ref{app:heatbalance} and resolving the contact-thermalization question directly. This problem is posed within geometry (A) and presupposes the first: if the composite segment does not exist with the stated content, the fork is a question about an object that is not there.

Such a treatment would do three things at once. First, it would decide between cases (i) and (ii) of Proposition~\ref{prop:composite}, and thereby determine whether the TI--SC--IQH platform falls under the anomaly-pinning theorem at all or instead realizes a concrete evasion of it through contact physics. Second, it would fix the flanking-versus-central arrangement in \eqref{eq:kappa_kin} from first principles, rather than by the natural-geometry choice we adopt in Sec.~\ref{sec:kinetic}. Third, through the partial-equilibration crossover, it would produce the length-dependent Lorenz ratio: because the Majorana branch carries heat but no charge, the thermal and charge equilibration lengths decouple parametrically, the regime charted for FQH edges in \cite{Protopopov2017,Nosiglia2018,Srivastav2021}, so the measured Lorenz ratio flows with junction length between the two limits \eqref{eq:L_kin} and \eqref{eq:L_top} rather than sitting at either. This decoupling has no analog in the all-charged graphene junction of Roy \emph{et al.}, and is specific to the superconducting platform. The whole is a line-junction kinetic-equation problem of exactly the type developed by the Gefen--Mirlin school, and it is the piece that promotes the equilibrated-limit theorem of this paper to a statement about the physical device. The stakes are bounded either way, as discussed after Proposition~\ref{prop:composite}.

\emph{The vortex Fano factor.} The factor $\mathcal{F}_v$ for $e/4$ vortex tunneling in \eqref{eq:S_top} is the sole material-dependent unknown in the shot-noise diagnostic and has not been computed. The out-of-equilibrium anyonic-sea technology of \cite{ZhangGornyiGefen2025} should apply, with modifications appropriate to the composite nature of the vortex excitation: it is fractionally charged but not a standard FQH anyon, and its braiding statistics involve the Majorana sector in a way that distinguishes it from $\nu=5/2$ Pfaffian quasiparticles. This is a self-contained computation within the Manna--Das--Goldstein--Gefen framework \cite{MannaDasGoldsteinGefen2024a,MannaDasGoldsteinGefen2025} and completes the noise diagnostic. It carries one further requirement, noted in Sec.~\ref{sec:diagnostic}: the calculation must be performed for a \emph{correlated} vortex array --- positionally ordered by the London repulsion and by the Meissner-screened repulsion between the $e/4$ charges --- rather than for independent tunnelers, and it should therefore settle the scaling exponent in \eqref{eq:S_top} along with the Fano factor itself. Since the $B$-dependence of the topological noise is what distinguishes it from the plateau-locked kinetic signal, that exponent is essential to the diagnostic and not a detail.

Three further directions are worth recording but are secondary. A microscopic lattice calculation of $\mathcal{C}_M$ for the proximity-induced Fu--Kane vortex lattice, on the lines of Murray and Vafek \cite{MurrayVafek2015} for chiral $p$-wave SCs, would determine which regime of \eqref{eq:c_total} the physical system occupies; if $\mathcal{C}_M = \pm 1$ rather than $0$, the topological boundary value becomes integer or zero and the headline coincidence dissolves --- a different paper. In the strong-proximity limit (narrow spacer, or geometry (C)) the TI--SC and quantum-Hall edges merge into a single hybrid Majorana boundary; Theorem~\ref{thm:pinning} still applies, since the hybrid edge has a well-defined net central charge, but the segment decomposition becomes delicate and the charge-sector diagnostics are lost. And the framework suggests a generalization to colliding beams of $e/4$ vortex excitations, analogous to Bartolomei \emph{et al.}\ \cite{Bartolomei2020} but with cross-correlations shaped by the Majorana sector --- a substantial program in its own right, and the natural sequel.

\section*{Acknowledgments}
We acknowledge the foundational role of Roy \emph{et al.}\ \cite{RoyNatComm2026} in motivating the present analysis; we also thank George Batrouni and Brond Larson for ongoing collaboration \cite{JacobsBatrouniLarson2026} on the electromagnetic counterpart of this work, and Yuval Gefen and Sourav Manna for many valuable discussions and exchanges of ideas.  
GRZ thanks Andrea Cappelli for useful discussions and sharing of insights. 

\appendix

\section{Heat balance for the three-arm junction}
\label{app:heatbalance}

The explicit three-segment elimination is given in the proof of Theorem~\ref{thm:pinning} [Eqs.~\eqref{eq:Mbalance}--\eqref{eq:R3arm}], including the flanking-versus-central asymmetry of $\mathcal{R}$ that selects the plateau values \eqref{eq:kappa_kin}. We record here only the physical caveat that motivates the open problem of Sec.~\ref{sec:open}: in the TI--SC--IQH case the segment adjoining the floating contact is a \emph{composite} of a charge-carrying IQH branch and a charge-blind Majorana branch, and the single-contact-temperature hypothesis of Theorem~\ref{thm:pinning} (complete thermalization of both branches to one $T_M$) is not guaranteed, since an Andreev-type contact thermalizes the Majorana branch differently from a normal ohmic contact. The kinetic-equation treatment posed in Sec.~\ref{sec:open} is what settles this question, and with it both the applicability of the theorem to the platform and the arrangement assumed in \eqref{eq:kappa_kin}. The caveat is stated here for geometry (A) of Sec.~\ref{sec:kinetic}, in which the composite segment exists; in geometry (C) it does not arise, because Andreev conversion has already removed the charged branch.

\section{Status of the gravitational-\texorpdfstring{$\theta$}{theta} route}
\label{app:stone}

Stone's analysis \cite{StoneGravitational} of the gravitational thermal Hall effect raises a question about the precise relationship between the gravitational-$\theta_g$ term \eqref{eq:axion_grav} and the laboratory-frame thermal current: a uniform thermal gradient corresponds in the gravitational analog to a uniform Newtonian potential, which does \emph{not} generate a transverse energy current, while only gravitational \emph{gradients} (tidal forces) do. The resolution, as Stone and subsequent authors clarify \cite{StoneGravitational,Kapustin2021}, is that the universal energy current \eqref{eq:IE} underlying all results in this paper is a manifestly physical, frame-independent quantity computed directly from the boundary CFT, and the proper laboratory-frame statement is the Cappelli--Huerta--Zemba formula \eqref{eq:CHZ} as written. The bulk gravitational-$\theta_g$ term is best regarded as a useful organizing principle and effective bookkeeping device rather than as the source of a direct calculational route to the laboratory thermal Hall conductance. We have adopted this view throughout, and Theorem~\ref{thm:pinning} inherits its rigor entirely from the boundary formulation.

\bibliographystyle{apsrev4-2}
\bibliography{tisc_v21}

\end{document}